\newtheorem{theorem}{Theorem}[section]
\newtheorem{observation}[theorem]{Observation}
\newtheorem{definition}[theorem]{Definition}
\newtheorem{corollary}[theorem]{Corollary}
\newtheorem{claim}[theorem]{Claim}
\newtheorem{lemma}[theorem]{Lemma}
\newtheorem{polyrule}{Rule}[section]
\def\ie{{\em i.e.}~}
\newenvironment{proofclaim}{
	\noindent \emph{Proof.}
}{%
	\hfill $\diamond$ \\
}
\newcommand{\BIT}{{\sc Dense Betweenness}}
\newcommand{\FAST}{\textsc{Feedback Arc Set in Tournaments}}
\newcommand{\rBIT}{{\sc $r$-Dense Betweenness}}
\newcommand{\FASHT}{\textsc{$r$-Dense Feedback Arc Set}}
\newcommand{\rFAST}{\textsc{$r$-Dense Transitive Feedback Arc Set}}
\title{Linear vertex-kernels for several dense ranking $r$-CSPs}
\author{Anthony Perez\thanks{LIFO - Universit\'e d'Orl\'eans}}
\date{}
\begin{document}

\maketitle

\begin{abstract}

\noindent A {\sc ranking $r$-constraint satisfaction} problem (ranking $r$-CSP for short) consists of a ground set of vertices $V$, an arity 
$r \geqslant 2$, a parameter $k \in \mathbb{N}$ and a \emph{constraint system} $c$, where $c$ is a function which maps 
rankings of $r$-sized sets $S \subseteq V$ to $\{0,1\}$. The objective is to 
decide if there exists a ranking $\sigma$ of the vertices satisfying all but at most $k$ constraints (\ie $\sum_{S \subseteq V, |S| = r} c(\sigma(S)) \leqslant k$). Famous 
ranking $r$-CSPs include \FAST{} and \BIT{}~\cite{ALS09,KS10}. %We first prove 
%that \FASBT{} admits a kernel containing $O(k^2)$ vertices, improving the previous bound of $O(k^3)$ vertices~\cite{MRRS11}. 
We consider such problems from the kernelization viewpoint.
We prove that so-called \emph{$l_r$-simply characterized} ranking $r$-CSPs admit linear vertex-kernels whenever they admit constant-factor approximation algorithms. This implies that 
\rBIT{} and \rFAST{}~\cite{KS10}, two natural generalizations of the previously mentioned problems, admit linear vertex-kernels. Moreover, we introduce another generalization of \FAST{}, which does not fit the aforementioned framework. Based on techniques from~\cite{CFR06} we obtain a $5$-approximation, and then provide a linear vertex-kernel. 
\end{abstract}

\section*{Introduction}
\label{sec:intro}

Parameterized complexity is a powerful theoretical framework to cope with NP-Hard problems. The aim is to identify some \emph{parameter} $k$, 
independent from the instance size $n$, which captures the exponential growth of the complexity to solve 
the problem at hand. A parameterized problem is said to be \emph{fixed parameter tractable} (FPT for short) whenever it can be solved in $f(k) \cdot n^{O(1)}$ time, 
where $f$ is any computable function~\cite{DF99,Nie06}. In this extended abstract, we focus on \emph{kernelization}, which is one of the most efficient technique to design parameterized algorithms~\cite{Bod09}. A \emph{kernelization algorithm} (or kernel for short) 
for a parameterized problem $\Pi$ is a \emph{polynomial-time} algorithm that given an 
instance $(I, k)$ of $\Pi$ outputs an \emph{equivalent} instance $(I',k')$ of $\Pi$ 
such that $|I'| \leqslant g(k)$ and $k' \leqslant k$. The function $g$ is said to be 
the \emph{size} of the kernel, and $\Pi$ admits a \emph{polynomial kernel} 
whenever $g$ is a polynomial. A well-known result states that a (decidable) parameterized problem 
is FPT if and only if it admits a kernel~\cite{Nie06}. Observe that this theoretical result provides kernels of \emph{super-polynomial} size. 
Recently, several results gave evidence that 
there exist parameterized problems that \emph{do not} admit \emph{polynomial} kernels~(under some complexity-theoretic assumption, see \emph{e.g.}~\cite{BDFH09,BJK11}). Hence, determining the existence of \emph{polynomial} kernels is of important interest from both 
the theoretical and practical viewpoints.  \\

% PRUDENCE : PARLER D'EDITION EST UN PEU DANGEREUX. EN EFFET, SI ddON EDITE LES CONDITIONS QUI PERMETTENT DE SATISFAIRE UNE CONTRAINTE, CA IMPLIQUE BEAUCOUP PLUS DE MODIFICATIONS SUR LE SYSTEME DE CONTRAINTES (par ex 2(r-2)! ordres satisfont une contrainte de r-BIT, et donc changer les sommets selected (UNE SEULE OPERATION) implique de l'ordre de r! operations sur le systeme de contraintes !! (mettre à 0 ceux qui étaient bons, et à 1 ceux qui le deviennent...)). 
% TODO: IL FAUT DONC FAIRE ATTENTION QUAND ON PARLE D'EDITION (OU DEFINIR LE SYSTEME DE CONTRAINTES AUTREMENT ? C'EST PRATIQUE MAIS JE M'EN SERS TRES PEU...). OK : PARLER DIRECT DES SOMMETS SELECTED (CE SONT EUX QUI DEFINISSENT LES ORDRES QUI VALENT 0 !!!)
We mainly study 
ranking $r$-CSPs from the kernelization viewpoint. 
A ranking $r$-CSP consists of a ground set of vertices $V$, an arity 
$r \geqslant 2$ and a \emph{constraint system} $c$, where $c$ is a function which maps rankings (\ie orderings) of $r$-sized sets $S \subseteq V$ to $\{0,1\}$.~The objective is to find a ranking  $\sigma$ of the vertices 
that minimizes the number of unsatisfied constraints (\ie $\sum_{S \subseteq V, |S| = r} c(\sigma(S))$). We study the decision version of these problems, where the instance comes together with some parameter $k \in \mathbb{N}$ and the aim is to decide if there exists a ranking of the vertices satisfying \emph{all 
but at most $k$} constraints.~We focus on problems where a constraint $S$ 
can be represented by a set of \emph{selected vertices}, which determines the conditions that 
a ranking must verify in order to satisfy $S$. Moreover, we consider such problems 
on \emph{dense instances}, where \emph{every} set of $r$ vertices is a constraint. 
A lot of well-studied problems can be expressed in such terms~\cite{ALS09, KS11}. 
For instance, \FAST{} fits this 
framework with $r = 2$, any arc $uv$ being satisfied by a ranking $\sigma$ (\ie $c(\sigma(uv)) = 0$) iff $u <_\sigma v$. 
Such problems can be equivalenty stated in terms of  
modification problems: can we \emph{edit} at most $k$ constraints to obtain an instance that admits a ranking satisfying all its constraints? \\

 %By \emph{editing a constraint}, we mean that we 
%modify the set of selected vertices. \\

%Well-known examples 
%of modification problems are the \textsc{Vertex Cover} and \textsc{Feedback Vertex Set} 
%problems, which admit a kernel with at most $2k$ vertices~\cite{SY11} and a quadratic vertex-kernel~\cite{Tho10}, respectively. 
\noindent \textbf{Related results.} While a lot of kernelization results are known for \emph{graph} modification 
problems~\cite{BP11,KW09,Tho10,vBMN10}, fewer results exist regarding directed graph and hypergraph modification problems. 
An example of polynomial kernel for a directed graph modification problem is the 
quadratic vertex-kernel for \textsc{Transitivity Editing}~\cite{WKNU09}.~Regarding dense ranking $r$-CSPs, \FAST{} and \BIT{} are NP-Complete~\cite{AA07,Alo06,CTY07} but fixed parameter tractable \cite{ALS09,KS10}, and both admit a linear vertex-kernel~\cite{PPT11}. %Similarly, the 
%\FASBT{} problem is NP-Complete~\cite{GHM07}, fixed parameter tractable~\cite{DGHNT10} and admit a cubic vertex-kernel~\cite{MRRS11}. 
Concerning ranking $r$-CSPs, Karpinski and Schudy~\cite{KS11} recently showed PTASs and subexponential parameterized algorithms for so-called (\emph{weakly})-fragile ranking $r$-CSPs. A constraint is \emph{fragile} (resp. \emph{weakly-fragile}) if whenever it is satisfied by one ranking then 
making one single move (resp. making one of the following moves: swapping the first two vertices, the last two vertices 
or making a cyclic move) makes it unsatisfied. These problems contain in particular \FAST{} and \BIT{}. \\

% TODO: AJOUTER LES SECTIONS
\noindent \textbf{Our results.} Following this line of research, we provide several linear  vertex-kernels for dense ranking $r$-CSPs. 
More precisely, we introduce a particular type of 
ranking $r$-CSPs, called \emph{$l_r$-simply characterized}, and prove that such problems admit linear vertex-kernels whenever they admit constant-factor approximation algorithms (Section~\ref{sec:general}). Surprisingly, our kernels mainly 
use a modification of the classical sunflower reduction rule, which usually provides polynomial kernels~\cite{ALS09,BP11,GHPP12}. This result improves the size of the kernel for \BIT{}~\cite{PPT11}. Moreover, it implies linear vertex-kernels for \rBIT{} and \rFAST{}, two natural generalizations of \FAST{} and \BIT{}. Both problems were left open by Karpinski and Schudy~\cite{KS11}. 
Finally, we introduce a different generalization of \FAST{}, which allows more freedom on the satisfiability of a given constraint. 
Based on ideas used for \FAST{}~\cite{CFR06}, we prove that this problem admits a $5$-approximation algorithm, and then obtain a linear vertex-kernel (Section~\ref{subsec:rfasht}). 

\section{Preliminaries}
\label{sec:prelim}

A ranking $r$-CSP consists of a ground set of vertices $V$, an arity $r \geqslant 2$, a parameter $k \in \mathbb{N}$ and a \emph{constraint system} $c$, where $c$ is a function which maps rankings (\ie orderings) of $r$-sized sets $S \subseteq V$ to $\{0,1\}$. In a slight abuse of notation, we refer to a set of vertices $S \subseteq V$, $|S| = r$, as a \emph{constraint} (when we are actually referring to $c$ applied to rankings of $S$). A constraint $S$ is \emph{non-trivial} whenever there exists a ranking $\sigma$ such that $c(\sigma(S)) = 1$. In the following, we always mean non-trivial constraints when speaking of constraints. A constraint $S$ is \emph{satisfied} by a ranking $\sigma$ whenever $c(\sigma(S)) = 0$, in which case $S$ is said to be \emph{consistent} w.r.t. $\sigma$ (we forget the mention \emph{w.r.t. $\sigma$} whenever the context is clear). Otherwise, we say that $S$ is inconsistent. 
Similarly, a ranking $\sigma$ is \emph{consistent} with the constraint system $c$ if it does not contain any inconsistent constraint, and \emph{inconsistent} otherwise. The objective of a ranking $r$-CSP is to find a ranking of the vertices 
with \emph{at most $k$} inconsistent constraints. 
We consider ranking $r$-CSPs where a constraint $S$ can 
be represented by a subset $sel(S) \subseteq S$ of \emph{selected vertices}, that determines  the conditions that a ranking must verify in order to satisfy $S$. 
An equivalent formulation of these problems is the 
following: is it possible to \emph{edit} at most $k$ constraints so that there exists 
a ranking consistent with the new constraint system? By \emph{editing a constraint}, 
we mean that we modify its set of selected vertices. \\

We consider \emph{dense} instances, where \emph{every} subset of $r$ vertices of $V$ is a constraint. Let $R = (V,c)$ be an instance of any ranking $r$-CSP. Given a set of vertices $V' \subseteq V$, we define the instance \emph{induced by $V'$} (and denote it $R[V']$) as the constraint system $c$ restricted to $r$-sized subsets of $V'$. A set of vertices $C \subseteq V$ is a \emph{conflict} if there does not exist any ranking consistent with the instance induced by $C$.  
We mainly study the following problems.\\

\fbox{\begin{minipage}{0.9\textwidth}
	\rBIT{}:\\
	\textbf{Input}: A set of vertices $V$, a constraint system $c$, where a constraint $S = \{s_1, \ldots, s_r\}$ contains two \emph{selected vertices} $s_i$ and $s_j$, $1 \leqslant i < j \leqslant r$, and is satisfied by a ranking $\sigma$ iff $s_i <_\sigma s_l <_\sigma s_j$ or $s_j <_\sigma s_l <_\sigma s_i$ holds for $1 \leqslant l \leqslant r$, $l \neq \{i,j\}$. \\
	\textbf{Parameter}: $k$. \\
	\textbf{Output}: A ranking $\sigma$ of $V$ that satisfies all but at most $k$ constraints. %(\ie $\sum_{S \subseteq V,\ |S| = r} c(\sigma(S)) \leqslant k)$. 
\end{minipage}}	
\\[0.5cm]

\fbox{\begin{minipage}{0.9\textwidth}
	\FASHT{}:\\ 
	\textbf{Input}: A set of vertices $V$, a constraint system $c$, where a constraint $S$ contains one selected vertex $s$ and is satisfied 
by a ranking $\sigma$ if $u <_\sigma s$ for any $u \in S \setminus \{s\}$. \\
	\textbf{Parameter}: $k$. \\
	\textbf{Output}: A ranking $\sigma$ of $V$ that satisfies all but at most $k$ constraints. %(\ie $\sum_{S \subseteq V,\ |S| = r} c(\sigma(S)) \leqslant k)$. 
\end{minipage}}
\\[0.3cm]

We also consider another generalization of the \FAST{} problem, namely \rFAST{} ({\sc $r$-FAST})~\cite{KS11}, where a constraint $S$ corresponds to an acyclic tournament and is satisfied by a ranking $\sigma$ if 
	$\sigma$ is the transitive ranking of the corresponding tournament.

\begin{figure}[h]
\label{fig:constraints}

	\centerline{\includegraphics[scale=3]{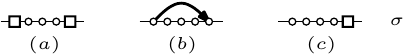}}
	\caption{Illustration of $(a)$ {\sc $r$-BIT}, $(b)$ {\sc $r$-TFAST} and $(c)$ {\sc $r$-FAST} for $r = 5$. The square vertices represent the 
	selected vertices of the constraints, that are consistent with $\sigma$.}

\end{figure} 

\newpage

\noindent \textbf{Ordered instances.} In the following, we consider instances whose vertices are ordered under some fixed ranking $\sigma$ (\ie instances of the form $R_\sigma = (V, c, \sigma)$). Given any constraint $S = \{s_1, \ldots, s_r\}$, with $s_i <_\sigma s_{i+1}$ for $1 \leqslant i < r$, $span(S)$ denotes the set of vertices $\{v \in V : s_1 \leqslant_\sigma v \leqslant_\sigma s_r\}$. A constraint $S$ is \emph{unconsecutive} if 
$|span(S)| > r$, and \emph{consecutive} otherwise. Given $V' \subseteq V$, 
$R_\sigma[V']$ denotes the instance $R[V']$ ordered under $\sigma$. Finally, given a ranking $\sigma$ over $V$ and an inconsistent constraint $S$, we say that we \emph{edit $S$ w.r.t. $\sigma$} whenever we edit its selected vertices so that it becomes 
satisfied by $\sigma$.

\section{Simple characterization and sunflower}
\label{sec:general}

In this section, we describe the general framework of our kernelization algorithms, using a modification of the \emph{sunflower rule} together with the notion of \emph{simple characterization}. We first define the notion of sunflower, which has been widely used to obtain polynomial kernels for modification problems~\cite{Abu10,ALS09,BP11,GHPP12}.
% TODO: RAJOUTER HITTING SET% 
An \emph{edition} is a set of constraints $\mathcal{F}$ such that one can obtain a consistent instance by editing constraints in $\mathcal{F}$.  

\begin{definition}
	A \emph{sunflower} $\mathcal{S}$ is a set of conflicts $\{C_1, \ldots, C_m\}$ pairwise intersecting in \emph{exactly one} constraint $S$, called the \emph{center} of $\mathcal{S}$.
\end{definition} 

\begin{lemma}
\label{lem:flower}
	Let $R = (V, c)$ be an instance of any ranking $r$-CSP, and $S$ be the center of a sunflower $\mathcal{S} = \{C_1, \ldots, C_m\},$ $m > k$. Any edition of size at most $k$ has to edit $S$.
\end{lemma}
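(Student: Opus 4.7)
The plan is to argue by contradiction. Suppose there exists an edition $\mathcal{F}$ with $|\mathcal{F}| \leqslant k$ that does \emph{not} contain $S$; we will derive $|\mathcal{F}| \geqslant m > k$, a contradiction.

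For each $i \in \{1, \ldots, m\}$, let $\mathcal{F}_i = \{F \in \mathcal{F} : F \subseteq C_i\}$ be the set of edited constraints lying inside the conflict $C_i$. The first observation is that $\mathcal{F}_i$ cannot be empty: since $C_i$ is a conflict, the instance $R[C_i]$ admits no consistent ranking, and editing only constraints that lie \emph{outside} $C_i$ has no effect on $R[C_i]$. Hence at least one constraint of $\mathcal{F}$ must be a subset (of size $r$) of $C_i$.

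Next I would show that the collections $\mathcal{F}_1, \ldots, \mathcal{F}_m$ are pairwise disjoint under the assumption $S \notin \mathcal{F}$. Indeed, suppose $F \in \mathcal{F}_i \cap \mathcal{F}_j$ for some $i \neq j$. Then $F \subseteq C_i \cap C_j$; but by the definition of a sunflower the vertex sets $C_i$ and $C_j$ intersect exactly in the $r$ vertices of $S$, so $F \subseteq S$, and since $|F| = |S| = r$ this forces $F = S$, contradicting $S \notin \mathcal{F}$. Consequently the $\mathcal{F}_i$ are pairwise disjoint, each is nonempty, and there are $m$ of them, giving
\[
|\mathcal{F}| \;\geqslant\; \sum_{i=1}^{m} |\mathcal{F}_i| \;\geqslant\; m \;>\; k,
\]
which is the desired contradiction.

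I do not expect a real obstacle here: the only delicate point is being careful about the two-level set-theoretic bookkeeping (conflicts are sets of vertices, while the center $S$ and the members of $\mathcal{F}$ are $r$-sized constraints), and in particular exploiting the fact that the sunflower condition is phrased on vertex sets to conclude disjointness on the level of constraints. Once this is made explicit, the counting is immediate.
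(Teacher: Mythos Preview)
Your proposal is correct and follows essentially the same contradiction argument as the paper: assume $S \notin \mathcal{F}$, note that each conflict $C_i$ forces a distinct edited constraint into $\mathcal{F}$, and conclude $|\mathcal{F}| > k$. You simply make explicit the disjointness step (using that $C_i \cap C_j = S$ as vertex sets forces any shared edited constraint to equal $S$), which the paper leaves implicit in the phrase ``$\mathcal{F}$ must contain one constraint for every conflict $C_i$''.
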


\begin{proof}
	Let $\mathcal{F}$ be any edition of size at most $k$, and assume that $\mathcal{F}$ does not edit $S$. This means that $\mathcal{F}$ must contain one constraint for every conflict $C_i$, $1 \leqslant i \leqslant m$. Since $m > k$, we conclude that $\mathcal{F}$ contains more than $k$ constraints, a contradiction.
 \end{proof}

Observe that the sunflower rule cannot be applied directly on ranking $r$-CSPs, $r \geqslant 3$, since it may be the case that there exist several ways to edit the center of a given sunflower. In order to deal with this issue, we introduce the notion of \emph{simple characterization} for ranking $r$-CSPs. 
Roughly speaking, a ranking $r$-CSP is \emph{$l_r$-simply characterized} if for any ordered instance, any set of $l_r$ vertices which involve \emph{exactly one} inconsistent constraint is a conflict. We first give a formal definition of this notion, and next describe how the sunflower rule can be modified for such problems. 

\begin{definition}[Simple characterization]
\label{def:simplecharac}
	Let $\Pi$ be a ranking $r$-CSP, $R_\sigma 
	= (V, c, \sigma)$ be any ordered instance of $\Pi$, and $l_r \in \mathbb{N}$. The ranking $r$-CSP $\Pi$ is \emph{$l_r$-simply characterized} iff any set $C \subseteq V$ of $l_r$ vertices such that $R_\sigma[C]$ contains exactly one inconsistent constraint is a conflict. 
	
\end{definition}

%We now define the main rule of our kernelization algorithm. 

\begin{definition}[Simple sunflower]
\label{def:simplesunflower}
	 Let $R_\sigma = (V, c, \sigma)$ be an ordered instance of a $l_r$-simply 
	 characterized ranking $r$-CSP. A sunflower $\mathcal{S} = \{C_1, \ldots, C_m\}$ of 
	 $R_\sigma$ is \emph{simple} if %$C_i$ is a simple conflict for $1 \leqslant i \leqslant m$ and if 
	 its center is the only inconsistent constraint in $R_\sigma[C_i]$, $1 \leqslant i \leqslant m$.
	 
\end{definition}

\begin{polyrule}
\label{rule:correct}
	Let $\Pi$ be a $l_r$-simply  characterized ranking $r$-CSP. Let $R_\sigma = (V, c, \sigma)$ be an ordered instance of $\Pi$ 
	and $\mathcal{S} = \{C_1, \ldots, C_m\}$, $m > k$, be a simple sunflower of center $S$. 
	Edit $S$ w.r.t. $\sigma$ and decrease $k$ by $1$.
\end{polyrule}

\begin{lemma}
\label{lem:correct}
	Rule~\ref{rule:correct} is sound. 

\end{lemma}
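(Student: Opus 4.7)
The plan is to prove that $(R_\sigma, k)$ admits a solution if and only if $(R'_\sigma, k-1)$ does, where $R'_\sigma$ denotes the instance obtained from $R_\sigma$ by editing $S$ w.r.t.~$\sigma$. The direction from $(R'_\sigma, k-1)$ to $(R_\sigma, k)$ is immediate: $R$ and $R'$ differ only on the constraint $S$, so any edition of $R'_\sigma$ of size at most $k-1$, together with $S$, constitutes an edition of $R_\sigma$ of size at most $k$.

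For the converse, I would start from a ranking $\sigma^*$ such that the set $\mathcal{F}^*$ of constraints of $R$ inconsistent with $\sigma^*$ satisfies $|\mathcal{F}^*| \leqslant k$. Lemma~\ref{lem:flower} guarantees that $S \in \mathcal{F}^*$. A short pigeonhole argument then produces an index $i_0$ with $\mathcal{F}^* \cap C_{i_0} = \{S\}$: the sets $C_i \setminus \{S\}$ are pairwise disjoint since the petals share only $S$, each $C_i$ is a conflict and therefore meets $\mathcal{F}^*$, and $|\mathcal{F}^* \setminus \{S\}| \leqslant k - 1 < m$ forces some petal to be met by $\mathcal{F}^*$ only at $S$.

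The crucial step, which I expect to be the main obstacle, is to establish that the $\sigma$-edit $S'$ of $S$ remains consistent with $\sigma^*$ in $R'$. I would argue by contradiction: suppose $S'$ is inconsistent with $\sigma^*$. The non-$S$ constraints of $R[C_{i_0}]$ coincide with those of $R'[C_{i_0}]$ and, lying outside $\mathcal{F}^*$, are consistent with $\sigma^*$, so $S'$ would be the \emph{unique} inconsistent constraint in $R'_{\sigma^*}[C_{i_0}]$. The $l_r$-simply characterized hypothesis would then force $C_{i_0}$ to be a conflict in $R'$. But $R'[C_{i_0}]$ is in fact consistent with $\sigma$ itself, since its non-$S$ constraints are $\sigma$-consistent by the very definition of a simple sunflower and $S'$ is $\sigma$-consistent by construction. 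This contradiction shows that $S'$ is $\sigma^*$-consistent.

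Consequently the constraints of $R'$ inconsistent with $\sigma^*$ are exactly those in $\mathcal{F}^* \setminus \{S\}$, whose cardinality is at most $k - 1$. This yields the desired edition of $R'_\sigma$ and completes the proof.
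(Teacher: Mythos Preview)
Your proof is correct and follows essentially the same approach as the paper. Both arguments use Lemma~\ref{lem:flower} to place $S$ in any size-$\leqslant k$ edition, then pigeonhole to isolate a petal $C_{i_0}$ touched only at $S$, and finally invoke the $l_r$-simple characterization on that petal to reach a contradiction. The only difference is a dual framing of the contradiction step: the paper supposes $S$ was edited to something \emph{other} than its $\sigma$-consistent form and uses $\sigma$ as the witness ordering to exhibit a single inconsistent constraint in $C_{i_0}$, whereas you fix the edit to the $\sigma$-consistent form $S'$ and use the optimal ranking $\sigma^*$ as the witness ordering, then refute the resulting ``conflict'' via $\sigma$. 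Both routes are equally short and rely on the same ingredients; your version is simply more explicit about the two directions of equivalence.
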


\begin{proof}
	Let $\mathcal{F}$ be any edition of size at most $k$: by Lemma~\ref{lem:flower}, $\mathcal{F}$ must contain $S$. Since $|\mathcal{F}| \leqslant k$ and $m > k$, there exists $1 \leqslant i \leqslant m$ such that $S$ is the only constraint edited by $\mathcal{F}$ in $R[C_i]$. Assume that $S$ was not edited w.r.t. $\sigma$: since no other constraint has been edited in $R[C_{i}]$, $R_\sigma[C_i]$ still contains exactly one inconsistent constraint (namely $S$). Since $\Pi$ is $l_r$-simply characterized, it follows that $C_{i}$ defines a conflict, contradicting the fact that $\mathcal{F}$ is an edition.
 \end{proof}

The main problem that remains is to compute such a sunflower in polynomial time. The following result will allow us to do so, providing that $V$ contains sufficiently many vertices (w.r.t. parameter $k$). 

\begin{lemma}
	\label{lem:correctSC}
	Let $\Pi$ be a $l_r$-simply characterized ranking $r$-CSP, and $R_\sigma = (V, c, \sigma)$ be an ordered instance of $\Pi$ with at most $p \geqslant 1$ inconsistent constraints. If $|V| > p(l_r - r) + (l_r - r) \cdot (k+1) + r$, there exists a simple sunflower $\{C_1, \ldots, C_m\}$, $m > k$, that can be found in polynomial time.
\end{lemma}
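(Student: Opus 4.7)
The strategy is to fix any one inconsistent constraint as the center $S$ of the sunflower, then carve $V \setminus S$ into many disjoint ``petal candidates'' of size $l_r - r$, and argue by counting that enough of these extend $S$ to an $l_r$-vertex set in which $S$ is the \emph{only} inconsistent constraint; simple characterization then turns each such extension into a conflict, and disjointness of the petals turns the collection into a simple sunflower.

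Concretely, I would pick any inconsistent constraint $S$ (possible since $p \geqslant 1$) and set $W := V \setminus S$, so that $|W| = |V| - r > (p + k + 1)(l_r - r)$. I would then greedily partition $W$ into pairwise disjoint groups $G_1, \ldots, G_t$ of size exactly $l_r - r$, discarding a few leftover vertices if necessary; the size bound gives $t \geqslant p + k + 1$. For each $i$, let $C_i := S \cup G_i$, a set of exactly $l_r$ vertices containing $S$.

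The key step is to bound the number of ``spoiled'' petals. Call $G_i$ \emph{spoiled} if $R_\sigma[C_i]$ contains an inconsistent constraint $S' \neq S$. Such an $S'$ satisfies $S' \subseteq C_i = S \cup G_i$, equivalently $S' \setminus S \subseteq G_i$; since $|S'| = |S| = r$ and $S' \neq S$, the set $S' \setminus S$ is nonempty, and because the $G_i$ are pairwise disjoint it can sit inside at most one of them. Hence each of the at most $p - 1$ inconsistent constraints other than $S$ spoils at most one group, leaving at least $t - (p - 1) \geqslant k + 2$ unspoiled groups. For each unspoiled $G_i$, the only inconsistent constraint in $R_\sigma[C_i]$ is $S$, so by Definition~\ref{def:simplecharac} the set $C_i$ is a conflict. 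Selecting any $k + 1$ unspoiled $C_i$'s yields the desired family: they pairwise intersect in exactly $S$ because $C_i \cap C_j = S \cup (G_i \cap G_j) = S$, so this is a simple sunflower of center $S$ with $m > k$ petals.

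The only mildly delicate point is the disjointness argument showing that each ``bad'' constraint kills at most one candidate group; everything else is elementary arithmetic. Polynomial-time computation is immediate: the inconsistent constraints can be enumerated in $O(|V|^r)$ time, the partition into groups is arbitrary, and for each pair $(S', G_i)$ the spoiling test is a single containment check.
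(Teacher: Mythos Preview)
Your proof is correct and follows essentially the same approach as the paper: fix an inconsistent constraint $S$, carve $V\setminus S$ into disjoint $(l_r-r)$-sized blocks, and use the bound on inconsistent constraints to guarantee more than $k$ blocks whose union with $S$ has $S$ as its sole inconsistent constraint, whence simple characterization gives the conflicts. Your write-up is in fact more explicit than the paper's on the key point that each inconsistent $S'\neq S$ can spoil at most one block (because $S'\setminus S$ is nonempty and the blocks are disjoint), and your pairwise-intersection check $C_i\cap C_j=S$ cleanly certifies the sunflower structure.
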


\begin{proof}
	Let $S$ be any inconsistent constraint of $R_\sigma$. Since $R_\sigma$ contains at 
	most $p$ inconsistent constraints, there are at most $p$ disjoint sets $P_i$, $1 \leqslant i \leqslant p$, such that $|P_i| = l_r - r$ and 
	$R_\sigma[S \cup P_i]$ contains more than one inconsistent constraint. %Since $|V| > p(l_r - r) + (l_r - r) \cdot (k+1) + r$, 
	It follows that there exist at least $m > k$ disjoint sets $\{S_1, \ldots, S_m\}$ of size $l_r - r$ such that: $(i)$ $C_i = S \cup S_i$ contains $l_r$ vertices and $(ii)$ $R_\sigma[C_i]$ contains \emph{exactly one} inconsistent constraint, $1 \leqslant i \leqslant m$. Since $\Pi$ is $l_r$-simply  characterized,  $C_i$ defines a conflict for every $1 \leqslant i \leqslant m$. It follows that $\{C_1, \ldots, C_m\}$ is a simple sunflower of center $S$.%, with $m > k$.
 \end{proof}

In order to obtain the ranking necessary to apply Lemma~\ref{lem:correctSC}, we rely on the existence of a constant-factor approximation algorithms for the problem at hand. 

\begin{theorem}
\label{thm:kernelSC}
	Let $\Pi$ be a $l_r$-simply  characterized ranking $r$-CSP that admits a $q$-factor approximation algorithm for some constant $q > 0$. Then $\Pi$ admits a kernel with at most $k[(q+1) \cdot (l_r - r)] + l_r$ vertices.
\end{theorem}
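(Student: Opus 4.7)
The plan is to run the $q$-factor approximation once to seed an ordered instance, and then fire Rule~\ref{rule:correct} repeatedly until either $|V|$ drops below the target bound or $k$ becomes negative. Given an input $(R,k)$, I would first call the $q$-approximation to obtain a ranking $\sigma$. If the resulting ordered instance $R_\sigma = (V,c,\sigma)$ contains more than $qk$ inconsistent constraints, the optimum must exceed $k$ and we output a trivial \textsc{No}-instance. Otherwise, letting $p$ denote the number of inconsistent constraints of $R_\sigma$, we have $p \leqslant qk$.

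Next, I would iterate the reduction: whenever $|V| > p(l_r - r) + (l_r - r)(k+1) + r$, use Lemma~\ref{lem:correctSC} to exhibit in polynomial time a simple sunflower of size greater than $k$, then apply Rule~\ref{rule:correct} to edit its center $S$ with respect to $\sigma$ and decrement $k$ by one. The key observation here is that $\sigma$ itself does \emph{not} have to be recomputed across iterations: by definition, editing $S$ with respect to $\sigma$ makes $S$ consistent with $\sigma$, and no other constraint is modified, so $\sigma$ remains a valid ranking of the new instance and $p$ also drops by exactly one.

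It remains to track the threshold. One application of the rule decreases $p(l_r - r) + (l_r - r)(k+1) + r$ by $2(l_r - r)$ while $|V|$ is unchanged, so once the inequality holds it continues to hold after every application; the loop therefore runs until $k$ becomes negative, at which point we answer \textsc{No}. If instead the loop halts without rejection, then
$$|V| \;\leqslant\; p(l_r - r) + (l_r - r)(k+1) + r \;\leqslant\; qk(l_r - r) + (l_r - r)(k+1) + r \;=\; k\bigl[(q+1)(l_r-r)\bigr] + l_r,$$
which matches the claimed kernel size. Correctness of each reduction step follows from Lemma~\ref{lem:correct}, and polynomiality is immediate since the approximation runs once and at most $k+1$ sunflower extractions are performed.

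The only real subtlety is the reuse of $\sigma$ throughout the loop, which the ``w.r.t.~$\sigma$'' clause in Rule~\ref{rule:correct} is tailored to support; apart from that, the argument is bookkeeping on the thresholds in Lemma~\ref{lem:correctSC}.
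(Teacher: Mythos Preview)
Your argument mirrors the paper's proof almost exactly, including the crucial observation that $\sigma$ can be reused across iterations because editing the center w.r.t.~$\sigma$ decreases $p$ by one. There is, however, one small gap: you never dispose of the case $p \leqslant k$. After the approximation step you only conclude $p \leqslant qk$, and your claim ``the loop therefore runs until $k$ becomes negative'' silently assumes $p > k$. If $p \leqslant k$ and $|V|$ is large, your loop would reach $p = 0$ while $k$ is still nonnegative; at that point the loop condition may still hold, but Lemma~\ref{lem:correctSC} requires $p \geqslant 1$ and there is no inconsistent constraint left to center a sunflower on, so the step ``use Lemma~\ref{lem:correctSC} to exhibit \ldots{} a simple sunflower'' fails.

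The fix is exactly what the paper does: immediately after computing $\sigma$, check whether $p \leqslant k$ and, if so, return a trivial \textsc{Yes}-instance (since $\sigma$ itself witnesses at most $k$ inconsistencies). With that one extra line your proof is complete and identical in spirit to the paper's.
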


\begin{proof}
	Let $R = (V, c)$ be an instance of $\Pi$. We start by computing a ranking $\sigma$ containing $p$ inconsistent constraints using the $q$-factor approximation algorithm. Observe that we can assume that $p > k$, since otherwise we simply return a small trivial \textsc{Yes}-instance. Similarly, we can assume that $p \leqslant qk$, since otherwise we return a small trivial \textsc{No}-instance. We now consider $R_\sigma = (V, c, \sigma)$ and assume that $|V| > p(l_r - r) + (l_r - r) \cdot (k+1) + r$: by Lemma~\ref{lem:correctSC}, it follows that there exists a simple sunflower that can be found in polynomial time, and hence Rule~\ref{rule:correct} can be applied. Since conditions of Lemma~\ref{lem:correctSC} still hold after an application of Rule~\ref{rule:correct}, repeating this process on $R_\sigma$ implies that every inconsistent constraint must be edited. Since $p > k$, we return a small trivial \textsc{No}-instance in such a case. This means that $|V| \leqslant qk(l_r - r) + (l_r - r) \cdot (k+1) + r$, implying the result.
 \end{proof}

\section{Simple characterization of several ranking $r$-CSPs}
\label{sec:ranking}

\subsection{{\sc \BIT{}}} 

As a first consequence of Theorem~\ref{thm:kernelSC}, we improve the size of the linear vertex-kernel for 
BIT from $5k$~\cite{PPT11} to $(2 + \epsilon)k + 4$ for any $\epsilon > 0$. The result directly follows 
from the fact that BIT admits a PTAS~\cite{KS11} and is $4$-simply characterized~\cite{PPT11}. 

\begin{corollary}
\label{thm:kernelbit}
	\BIT{} admits a kernel with at most $(2 + \epsilon)k + 4$ vertices.
\end{corollary}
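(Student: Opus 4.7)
The plan is to derive this as a direct application of Theorem~\ref{thm:kernelSC}. For \BIT{} we have arity $r = 3$ (a betweenness constraint involves three vertices, one declared to lie between the other two), and by the result of~\cite{PPT11} the problem is $l_r$-simply characterized with $l_r = 4$. Thus $l_r - r = 1$, which is exactly what will allow the bound on the kernel size to collapse to something of the form $(c \cdot k + \text{const})$ with a small constant $c$.

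Next, I would invoke the PTAS of Karpinski and Schudy~\cite{KS11} for \BIT{}: for any $\epsilon > 0$, we can compute in polynomial time a ranking $\sigma$ whose number of inconsistent constraints is at most a factor $q = 1 + \epsilon$ of the optimum. Theorem~\ref{thm:kernelSC} then yields a kernel on at most
\[
k\bigl[(q+1)(l_r - r)\bigr] + l_r \;=\; k\bigl[(2+\epsilon)\cdot 1\bigr] + 4 \;=\; (2+\epsilon)k + 4
\]
vertices, which is precisely the claim.

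The proof is essentially a one-line substitution, so there is no real obstacle left inside this corollary; the real content has been deferred to the two external ingredients, namely the $4$-simple characterization established in~\cite{PPT11} and the PTAS of~\cite{KS11}. If any care is needed, it is only to note that both ingredients are applicable to dense instances as considered here, and that the approximation factor $q$ can indeed be pushed arbitrarily close to $1$ (so that $q + 1$ approaches $2$) at the cost of a larger hidden constant in the running time of the approximation step, but not of the kernel size.
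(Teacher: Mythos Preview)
Your argument is correct and is exactly the paper's own justification: plug $r=3$, $l_r=4$ (from~\cite{PPT11}) and $q=1+\epsilon$ (from the PTAS of~\cite{KS11}) into Theorem~\ref{thm:kernelSC} to get $(2+\epsilon)k+4$. Nothing is missing.
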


\subsection{{\sc \rBIT{}} ($r \geqslant 4$)} 
\label{subsec:rbit}

We now consider the {\sc $r$-BIT} problem with constraints of arity $r \geqslant 4$. The main difference with the case $r = 3$ lies in the fact that there is no longer a \emph{unique} way to rank the vertices of a consistent instance in order to satisfy all constraints. In particular, this means that the problem is not 
$(r+1)$-simply characterized. 
However, as we shall see in Lemma~\ref{lem:rBITdoubleconflict}, {\sc $r$-BIT} is $2r$-simply characterized. To see this, we first need the following result. 

\begin{lemma}
\label{claim:rBITconflict}
	Let $R_\sigma = (V, c, \sigma)$ be an ordered instance of $r$-BIT, and $C = \{s_1, \ldots, s_{r+1}\}$ be a set of $r + 1$ vertices such that $s_i <_\sigma s_{i+1}$, $1 \leqslant i \leqslant r$. Assume that $R_\sigma[C]$ contains exactly one inconsistent constraint $S$. Then $C$ is a conflict if and only if:
	\begin{enumerate}[(i)]
		 \item $S$ is unconsecutive or, 
		 \item  $S = \{s_1, \ldots, s_r\}$ and $sel(S) \neq \{s_1, s_l\}$ with $2 < l < r$ (resp. $S = \{s_2, \ldots, s_{r+1}\}$ and $sel(S) \neq \{s_l, s_{r+1}\}$ with $2 < l < r$). 
	\end{enumerate}
\end{lemma}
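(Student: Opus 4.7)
The $r+1$ constraints of $R_\sigma[C]$ are exactly the sets $S_j := C \setminus \{s_j\}$ for $1 \leqslant j \leqslant r+1$; let $j_S$ denote the index with $S = S_{j_S}$. Since every $S_j \neq S$ is consistent with $\sigma$, its selected vertices coincide with its $\sigma$-extremes: $sel(S_1) = \{s_2, s_{r+1}\}$, $sel(S_{r+1}) = \{s_1, s_r\}$, and $sel(S_j) = \{s_1, s_{r+1}\}$ for every $j \in \{2, \ldots, r\} \setminus \{j_S\}$. The plan, for the non-trivial direction, is to pin down the shape of an arbitrary ranking $\pi$ of $C$ satisfying all consistent constraints, and then to examine whether it can also satisfy $S$.

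The key preliminary step is to force $s_1$ and $s_{r+1}$ to be the global $\pi$-extremes of $C$. Since $r \geqslant 4$, the index set $\{2, \ldots, r\} \setminus \{j_S\}$ contains at least two elements $j_1 \neq j_2$; the corresponding constraints $S_{j_1}$, $S_{j_2}$ both select $\{s_1, s_{r+1}\}$, and $S_{j_1} \cup S_{j_2} = C$. Hence in $\pi$ every vertex of $C \setminus \{s_1, s_{r+1}\}$ lies strictly between $s_1$ and $s_{r+1}$; WLOG $\pi(s_1) < \pi(s_{r+1})$.

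For case (i), assume $j_S \in \{2, \ldots, r\}$, so that both $S_1$ and $S_{r+1}$ are consistent. From $sel(S_1) = \{s_2, s_{r+1}\}$, with $s_{r+1}$ already the $\pi$-maximum, the vertex $s_2$ must be the $\pi$-minimum of $\{s_2, \ldots, s_{r+1}\}$; symmetrically, $s_r$ is the $\pi$-maximum of $\{s_1, \ldots, s_r\}$. Since $S$ contains both $s_1$ and $s_{r+1}$, satisfying $S$ would force $sel(S) = \{s_1, s_{r+1}\}$; but these are exactly the $\sigma$-extremes of $S$, contradicting the $\sigma$-inconsistency of $S$. Hence no admissible $\pi$ exists and $C$ is a conflict.

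For case (ii), WLOG $S = S_{r+1} = \{s_1, \ldots, s_r\}$ (the case $S = S_1$ is symmetric). Now $S_1$ is the only ``boundary'' constraint that remains consistent and, as above, it forces $s_2$ to be the $\pi$-minimum of $\{s_2, \ldots, s_{r+1}\}$. Consequently, $\{s_1, \ldots, s_r\}$ has $\pi$-minimum $s_1$ and $\pi$-maximum some $s_l$ with $l \in \{3, \ldots, r\}$, so satisfying $S$ forces $sel(S) = \{s_1, s_l\}$; the value $l = r$ is ruled out because it would match the $\sigma$-extremes of $S$ and contradict its inconsistency, leaving $l \in \{3, \ldots, r-1\}$. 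Conversely, when $sel(S) = \{s_1, s_l\}$ with $2 < l < r$, the ranking obtained from $\sigma$ by moving $s_l$ to the position immediately before $s_{r+1}$ satisfies every constraint (a routine verification on the five classes $S$, $S_1$, $S_l$, the remaining $S_j$ with $j \in \{2,\ldots,r\}\setminus\{l\}$, and the implicit $S_{r+1}=S$). The main obstacle is the extremal step --- proving that $s_1$ and $s_{r+1}$ must be the global $\pi$-extremes --- and it is precisely there that the hypothesis $r \geqslant 4$ is used, since it guarantees at least two consistent constraints selecting $\{s_1, s_{r+1}\}$.
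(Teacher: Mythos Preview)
Your argument is correct, and it is organized quite differently from the paper's proof. The paper proceeds by an exhaustive case analysis on the shape of $sel(S)$: it treats separately the unconsecutive case (two sub-cases according to whether one or both selected vertices are ``wrong'') and the consecutive case $S=\{s_1,\ldots,s_r\}$ (four further sub-cases on the position of $sel(S)$), each time exhibiting two or three specific constraints of $R_\sigma[C]$ that cannot be simultaneously satisfied. Your proof instead isolates one structural fact up front --- that any ranking $\pi$ satisfying the consistent constraints must place $s_1$ and $s_{r+1}$ at the two ends of $C$ --- obtained by picking two indices $j_1\neq j_2$ in $\{2,\ldots,r\}\setminus\{j_S\}$ (this is exactly where $r\geqslant 4$ enters, as you note). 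Once this is in hand, the unconsecutive case collapses immediately, and the consecutive case reduces to a single clean dichotomy on the $\pi$-maximum of $\{s_1,\ldots,s_r\}$. The gain is that you avoid the paper's sub-case enumeration entirely; the cost is that the ``extremal'' step has to be stated and proved as a separate lemma-within-the-proof, whereas the paper never needs it globally. Two cosmetic remarks: in case~(i) the deductions about $s_2$ and $s_r$ are not actually used (the contradiction already follows from $\{s_1,s_{r+1}\}\subseteq S$), and in the converse your ``five classes'' enumeration double-counts $S=S_{r+1}$; neither affects correctness.
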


\begin{proof}
	Assume first that the vertices of $S$ are unconsecutive (\ie there exists $1 < i < r+1$ such that $S$ is equal to $C \setminus \{s_i\}$). 
	
	\begin{itemize}
	
		\item \emph{Case 1.} Assume that a single move on a selected vertex is sufficient to make $S$ consistent. To be more precise, we assume that we have $sel(S) = \{s_l, s_{r+1}\}$ for some $2 \leqslant l \leqslant r$, $l \neq i$ (the case $sel(S) = \{s_1, \ldots, s_l\}$, 
		$2 \leqslant l \leqslant r$ and $l \neq i$ being similar). Since all constraints induced by $C$ but $S$ are consistent w.r.t. $\sigma$, we know that $sel(C \setminus \{s_j\}) = \{s_1, s_{r+1}\},$ with $2 \leqslant j \leqslant r,$ $ j \neq \{i,l\}$ (which is well-defined since $r \geqslant 4$).

In order to satisfy $S$, any consistent ranking $\rho$ must rank $s_1$ between $s_l$ and $s_{r+1}$, while $C \setminus \{s_j\}$ implies that $s_l$ must be between $s_1$ and $s_{r+1}$. Since no ranking can satisfy both these properties, it follows that there does not exist any ranking consistent with $C$.

	\item \emph{Case 2.} Next, assume that the two selected vertices must be moved to obtain a consistent ranking for $S$. In other words, we have $sel(S) = \{s_l, s_{l'}\}$ for some $2 \leqslant l < l' \leqslant r$, $\{l,l'\} \neq i$. Since all constraints but $S$ are consistent, we know that $sel(C \setminus \{s_{r+1}\}) = \{s_1, s_r\}$. In 
	order to satisfy $S$, any 
	consistent ranking must rank $s_1$ between $s_l$ and $s_{l'}$, while 
	$C \setminus \{s_{r+1}\}$ implies that $s_1$ must be before (or after) both 
	$s_l$ and $s_{l'}$, which cannot be. %Hence $C$ is a conflict in that case.  

\end{itemize}

	Assume now that the vertices of the inconsistent constraint $S$ are $\{s_1, \ldots, s_r\}$. Moreover, assume that a single move on a selected vertex is sufficient to make $S$ consistent (see Figure~\ref{fig:two} for an illustration). 
	
	\begin{itemize}
	
	\item \emph{Case 1.} If we have $sel(S) = \{s_1, s_l\}$ with $2 < l < r$, then swapping $s_l$ and $s_r$ will yield a consistent ranking. Indeed, observe that since all constraints but $S$ are consistent, it follows that the only constraint where $s_l$ is a selected vertex is $S$, and that $(s_1, s_{r+1})$ and $(s_2, s_{r+1})$ are the selected vertices of the other constraints. Hence swapping $s_l$ and $s_r$ will not modify the consistency of any constraint but $S$. This is the only case where we can get a consistent ranking. 
	
	\item \emph{Case 2.} Assume now that $sel(S) = \{s_1, s_l\}$ with $l = 2$. Then we have $sel(C \setminus \{s_1\}) = \{s_l, s_{r+1}\}$ and $sel(C \setminus \{s_l\}) = \{s_1, s_{r+1}\}$. Observe here that in order to satisfy $S$ and $C \setminus \{s_1\}$, any ranking $\rho$ must rank $s_r$ between $s_1$ and $s_l$ and between $s_l$ and $s_{r+1}$. W.l.o.g., this means 
	that we must have $\{s_1, s_{r+1}\} <_\rho s_r <_\rho s_l$, which is inconsistent 
	with $C \setminus \{s_l\}$. It follows that $C$ is a conflict. 
	
	\end{itemize}
	
	\begin{figure}
	\label{fig:two}
	
		\centerline{\includegraphics[scale=3]{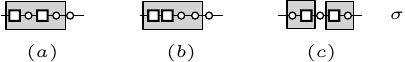}}
		\caption{An illustration of some cases for $r = 4$. The gray boxes represent the inconsistent constraint $S$. $(a)$ $sel(S) = \{s_1, s_l\}$ with $2 < l < r$; $(b)$ $sel(S) = \{s_1, s_l\}$ with $l = 2$ and $(c)$ 
		$S$ is unconsecutive. In the former case $C$ is not a conflict, while $C$ is a conflict in the latter ones. }
	
	\end{figure}
	
	\begin{itemize}

	\item \emph{Case 3.} Assume next that we have $sel(S) = \{s_l, s_r\}$, $1 < l < r$, and $sel(C \setminus \{s_l\}) = \{s_1, s_{r+1}\}$. We know that $sel(C \setminus \{s_j\}) = \{s_1, s_{r+1}\}$, with $2 \leqslant j < r,\ j \neq l$. Once again, this means that any consistent ranking $\rho$ 
	must rank $s_1$ between $s_r$ and $s_l$ in order to satisfy $S$, and $s_l$ between $s_1$ and 
	$s_{r+1}$ in order to satisfy $C \setminus \{s_j\}$) (recall that $l < r$). W.l.o.g., this means that we must have 
	$s_r <_\rho s_1 <_\rho s_l <_\rho s_{r+1}$, which is inconsistent with $C \setminus \{s_l\}$. Hence $C$ is a conflict. 
	
	\item \emph{Case 4.} Finally, assume that the two selected vertices must be moved to obtain a consistent ranking for $S$. In other words, we have $sel(S) = \{s_l, s_{l'}\}$ with $1 < l < l' < r$. We have $sel(C \setminus \{s_r\}) = \{s_1, s_{r+1}\}$. This implies that any consistent ranking must rank $s_1$ between 
$s_l$ and $s_{l'}$ (to satisfy $S$) and $s_{l}$ and $s_{l'}$ between $s_1$ and 
$s_{r+1}$ (to satisfy $C \setminus \{s_r\}$). Since no ranking can satisfy both these 
properties, it follows that $C$ is a conflict. 
	\end{itemize}
	
	The cases where the vertices of the inconsistent constraint are $\{s_2, \ldots, s_{r+1}\}$ is similar to the previous one, the only case yielding a consistent ranking being when $sel(S) = \{s_l, s_{r+1}\}$ and $2 \leqslant l < r$.
\end{proof}

% TODO : DIRE "OK CONFLIT TOUT LE TEMPS SAUF POUR UN CAS VRAIMENT PARTICULIER, 
% QU'ON VA POUVOIR ECARTER GRACE 
\noindent \textbf{Compatible constraints.}~ 
Given an ordered instance $R_\sigma = (V, c, \sigma)$  of $r$-BIT, we say that an inconsistent constraint $S = \{s_1, \ldots, s_r\}$, $s_1 <_\sigma \ldots <_\sigma s_r$, is \emph{right-} (resp. \emph{left-}) \emph{compatible} whenever $sel(S) = \{s_1, s_l\}$ with $2 < l < r$ (resp. $sel(S) = \{s_l, s_r\}$ with $1 < l < r - 1$), and \emph{right-} (resp. \emph{left-})\emph{incompatible} otherwise. The intuition behind this notion is the following: for any vertex $u$ lying before (resp. after) $S$ in $\sigma$ such that $S$ is the only inconsistent constraint in $R_\sigma[S \cup \{u\}]$, the set $S \cup \{u\}$ does not define a conflict (see Figure~\ref{fig:compatible}). The following result directly follows by definition of a compatible constraint. 

\begin{observation}
\label{obs:comp}
	Any right- (resp. left-)compatible constraint is left- (resp. right-)incompatible. 

\end{observation}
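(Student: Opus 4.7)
The plan is to unwind the definitions of right- and left-compatibility and observe that the two conditions on $sel(S)$ are mutually exclusive on a constraint of arity $r \geqslant 4$. This is a statement about the shape of $sel(S)$ only, so nothing about conflicts or rankings will be needed.

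First I would fix an inconsistent constraint $S = \{s_1, \ldots, s_r\}$ with $s_1 <_\sigma \ldots <_\sigma s_r$ and assume $S$ is right-compatible, so that $sel(S) = \{s_1, s_l\}$ with $2 < l < r$. In particular, $s_1 \in sel(S)$. For $S$ to be left-compatible, we would need $sel(S) = \{s_{l'}, s_r\}$ with $1 < l' < r - 1$; but this set contains neither element of index $1$ nor any index smaller than $2$, whereas $s_1 \in sel(S)$. Hence $sel(S)$ cannot simultaneously take the right-compatible and the left-compatible form, so $S$ must be left-incompatible. The converse direction (left-compatible implies right-incompatible) is entirely symmetric: left-compatibility forces $s_r \in sel(S)$, while right-compatibility forbids it.

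There is essentially no obstacle here, since the observation is a purely syntactic consequence of the definitions; the only point worth noting is that the ranges $2 < l < r$ and $1 < l' < r-1$ are chosen precisely so that the two forms of $sel(S)$ never coincide, which is where the hypothesis $r \geqslant 4$ implicit in the section is used.
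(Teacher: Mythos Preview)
Your proof is correct and matches the paper's own justification, which simply states that the observation ``directly follows by definition of a compatible constraint''; you have merely spelled out that definition-chase. One small remark: the hypothesis $r \geqslant 4$ is not actually needed for the observation itself (the incompatibility of $s_1 \in sel(S)$ with $s_1 \notin sel(S)$ holds for any $r$), it only ensures that the notions of right- and left-compatibility are non-vacuous.
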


\begin{figure}[t]

	\centerline{\includegraphics[scale=2.25]{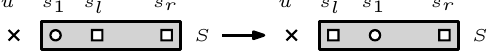}}
	\caption{Illustration of the notion of left-compatible constraints (only $S$ is inconsistent). By definition, 
	$s_1$ is not selected in any constraint in $R_\sigma[S \cup \{u\}]$, and $u$ and $s_{r}$ are selected in every constraint but $S$. Hence swapping $s_1$ and $s_l$ yields a consistent ranking. \label{fig:compatible}}
\end{figure}

\begin{lemma}
\label{lem:rBITdoubleconflict}
	The {\sc $r$-BIT} problem is $2r$-simply characterized.
\end{lemma}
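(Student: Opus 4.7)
The plan is to take any $C\subseteq V$ with $|C|=2r$ whose induced instance $R_\sigma[C]$ has a single inconsistent constraint $S$, and show that $C$ is a conflict by combining Lemma~\ref{claim:rBITconflict} with a short auxiliary argument involving at most two extra consistent constraints. If $S$ is unconsecutive inside $C$, some $v\in C\setminus S$ lies strictly between $s_1$ and $s_r$ in $\sigma$, hence $S$ is unconsecutive inside $R_\sigma[S\cup\{v\}]$ as well, and Lemma~\ref{claim:rBITconflict}(i) already produces a conflict of size $r+1$ inside $C$. Assume from now on that $S$ is consecutive inside $C$. If $S$ is neither right- nor left-compatible, then picking any vertex of $C$ adjacent to $S$ (at least one exists since $|C|>r$) yields an $(r+1)$-subset for which Lemma~\ref{claim:rBITconflict}(ii) produces a conflict. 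If $S$ is right-compatible and some vertex of $C$ precedes $S$ in $\sigma$, then by Observation~\ref{obs:comp} the constraint $S$ is left-incompatible, so adding this preceding vertex again yields an $(r+1)$-conflict by Lemma~\ref{claim:rBITconflict}(ii); the case where $S$ is left-compatible and some vertex of $C$ follows it is symmetric.

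The main obstacle is the remaining configuration: $S$ is right-compatible and occupies the first $r$ positions of $C$, so the whole of $C\setminus S=\{s_{r+1},\dots,s_{2r}\}$ lies after $S$ in $\sigma$ (the mirror situation with $S$ left-compatible at the end of $C$ is symmetric). Write $S=\{s_1,\dots,s_r\}$ with $sel(S)=\{s_1,s_l\}$ and $2<l<r$. No single added vertex creates an $(r+1)$-conflict here, so I would introduce two auxiliary constraints of $R_\sigma[C]$: the consecutive block $S'=\{s_l,s_{l+1},\dots,s_{l+r-1}\}$, and the shifted-and-skipped constraint $S^{\#}=\{s_{l-1},s_l,s_{l+1},\dots,s_{l+r-3},s_{l+r-1}\}$ obtained from $S'$ by translating one position to the left and omitting $s_{l+r-2}$. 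Both lie in $C$ because $l\leq r-1$ gives $l+r-1\leq 2r-2$; and since $S$ is the only inconsistent constraint of $R_\sigma[C]$, the selected pairs of $S'$ and $S^{\#}$ are forced to be their $\sigma$-extremes, namely $sel(S')=\{s_l,s_{l+r-1}\}$ and $sel(S^{\#})=\{s_{l-1},s_{l+r-1}\}$.

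It then remains to show that no ranking $\rho$ can simultaneously satisfy $S$, $S'$ and $S^{\#}$. Reversing $\rho$ if necessary, assume $s_1<_\rho s_l$; then $S$ forces every non-selected vertex of $S$, in particular $s_{l-1}$ and $s_{l+1}$, to lie strictly between $s_1$ and $s_l$ in $\rho$, so $s_{l-1}<_\rho s_l$ and $s_{l+1}<_\rho s_l$. By $S'$ the vertex $s_{l+1}$ must also lie strictly between $s_l$ and $s_{l+r-1}$, which rules out $s_l<_\rho s_{l+r-1}$ (otherwise the open interval would sit above $s_l$ and could not contain $s_{l+1}$); hence $s_{l+r-1}<_\rho s_l$. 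Finally $S^{\#}$ requires $s_l$ to be strictly between $s_{l-1}$ and $s_{l+r-1}$, but both of these endpoints now lie below $s_l$ in $\rho$, a contradiction. Thus $C$ admits no consistent ranking and is a conflict; the symmetric case when $S$ is left-compatible at the end of $C$ is handled in the same way. The delicate point is the choice of $S^{\#}$, which must share $s_l$ with $S$ and $S'$ and reach $s_{l+r-1}$ so as to pin it on the same side of $s_l$ as $s_{l-1}$, while still fitting inside a window of $2r$ vertices — this is why $s_{l+r-2}$ is omitted.
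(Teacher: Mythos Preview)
Your proof is correct and follows essentially the same architecture as the paper's: reduce via Lemma~\ref{claim:rBITconflict} and Observation~\ref{obs:comp} to the case where $S$ is right-compatible and occupies the first $r$ positions of $C$, and then exhibit two further (consistent) constraints of $R_\sigma[C]$ that, together with $S$, admit no common ranking. The only difference is the choice of those two constraints: the paper uses $S_2$ and $S_3$ with extremes $\{s_l,s_{l+r}\}$ and $\{s_1,s_{l+r}\}$ and argues about the position of $s_r$, whereas you use the blocks $S'$ and $S^{\#}$ with extremes $\{s_l,s_{l+r-1}\}$ and $\{s_{l-1},s_{l+r-1}\}$ and argue about $s_{l-1},s_{l+1},s_l$ --- a different but equally valid triple yielding the same contradiction.
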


\begin{proof} 	Observe that compatible constraints correspond to the cases of 
Lemma~\ref{claim:rBITconflict} that fail to define a conflict. 
Let $R_\sigma = (V, c, \sigma)$ be an ordered instance of {\sc $r$-BIT} and $C = \{s_1, \ldots, s_{2r}\}$ be a set of $2r$ vertices such that 
	$s_i <_\sigma s_{i+1}$ for $1 \leqslant i < 2r$. Assume that $R_\sigma[C]$ contains exactly 
	one inconsistent constraint $S$. We need to prove that $C$ is a conflict. By Lemma~\ref{claim:rBITconflict}, the result holds if $S$ is neither right nor left-compatible. So we assume w.l.o.g. that $S$ is right-compatible. By Lemma~\ref{claim:rBITconflict} we can also assume that the vertices of $S$ are consecutive and are the first of the ranking, since  otherwise $C$ is a conflict by Lemma~\ref{claim:rBITconflict} and we are done (recall that $S$ is left-incompatible by Observation~\ref{obs:comp}). Hence we may assume that $S = \{s_1, \ldots, s_r\}$ and $sel(S) = \{s_1, s_l\}$ for $2 < l < r$. Moreover, the constraints $S_2 = \{s_l, \ldots, s_r, \ldots, s_{l+r}\}$ and $S_3 = \{s_1, \ldots, s_l, s_r, \ldots, s_{l+r}\}$ (with $|S_3| = r$) have as selected vertices $sel(S_2) = \{s_l, s_{l + r}\}$ and $sel(S_3) = \{s_1, s_{l + r}\}$. 
In order to be consistent with $S$ and $S_2$, any ranking $\rho$ must rank $s_r$ 
between $\{s_1,s_{l+r}\}$ and $s_l$, which is inconsistent with the last constraint (which forces $s_r$ to be between $s_1$ 
and $s_{l+r}$). 
\end{proof}
\vspace{0.2cm}

\begin{corollary}
\label{thm:kernelrBIT}
	$r$-BIT admits a kernel with at most $(2 + \epsilon)rk + 2r$ vertices.
\end{corollary}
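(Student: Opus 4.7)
The plan is to derive Corollary~\ref{thm:kernelrBIT} as a direct instantiation of Theorem~\ref{thm:kernelSC}, feeding it the two ingredients it requires: (i) the simple characterization of $r$-BIT and (ii) a constant-factor approximation. Ingredient (i) is already established in Lemma~\ref{lem:rBITdoubleconflict}, which asserts that $r$-BIT is $2r$-simply characterized, so we may take $l_r = 2r$. For ingredient (ii), I would invoke the PTAS of Karpinski and Schudy~\cite{KS11}, which applies to (weakly)-fragile ranking $r$-CSPs and in particular yields, for any $\epsilon > 0$, a polynomial-time $(1+\epsilon)$-factor approximation for $r$-BIT.

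With these two ingredients in hand, the remaining work is purely arithmetic: substituting $l_r = 2r$ and $q = 1 + \epsilon$ into the kernel-size bound $k[(q+1)\cdot(l_r - r)] + l_r$ of Theorem~\ref{thm:kernelSC} gives
\[
k\bigl[(2 + \epsilon)\cdot r\bigr] + 2r \;=\; (2+\epsilon)\,rk + 2r,
\]
which is the stated bound.

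I do not anticipate any real obstacle: all the technical difficulty was absorbed into Lemma~\ref{lem:rBITdoubleconflict}, whose case analysis handles the subtlety that for $r \geqslant 4$ the problem is not $(r+1)$-simply characterized and that one must rule out both unconsecutive centers and all compatible/incompatible patterns by enlarging the conflict to $2r$ vertices. Once that lemma is available, the only points worth stating explicitly in the proof are the appeal to the PTAS of~\cite{KS11} (to justify the existence of a $(1+\epsilon)$-approximation for arbitrary $\epsilon > 0$) and the plug-in calculation above; the rest of the machinery (Rule~\ref{rule:correct}, Lemma~\ref{lem:correct}, Lemma~\ref{lem:correctSC}, and the wrap-up in Theorem~\ref{thm:kernelSC}) is inherited unchanged from Section~\ref{sec:general}.
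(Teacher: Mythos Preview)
Your proposal is correct and follows exactly the route the paper intends: the corollary is an immediate instantiation of Theorem~\ref{thm:kernelSC} with $l_r = 2r$ from Lemma~\ref{lem:rBITdoubleconflict} and $q = 1+\epsilon$ from the PTAS of~\cite{KS11}, and your arithmetic is right. The paper gives no separate proof for this corollary, so there is nothing further to compare.
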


\subsection{{\sc \rFAST{}}} 
\label{subsec:rfast}

Karpinski and Schudy~\cite{KS10} considered a particular generalization of the \FAST{} problem, where every constraint $S$ is satisfied by a \emph{one particular} ranking $\sigma$ and no other (\ie $S$ corresponds to an acyclic tournament). 
We show that the {\sc $r$-TFAST}  problem admits a linear vertex-kernel as a particular case of \emph{fragile} ranking $r$-CSP~\cite{KS11}. We say that a ranking 
$r$-CSP is \emph{strongly fragile} whenever a constraint is satisfied by \emph{one particular ranking} and no other. 

\begin{lemma}
\label{lem:lcrfast}
	Let $\Pi$ be any strongly fragile ranking $r$-CSP, $r \geqslant 3$. Then $\Pi$ is $(r+1)$-simply  
	characterized. 
\end{lemma}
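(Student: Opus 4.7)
The plan is to argue by contradiction: assume a set $C$ of $r+1$ vertices with exactly one inconsistent constraint $S$ w.r.t.\ $\sigma$ admits some ranking $\rho$ consistent with $R[C]$, and derive that $\rho$ must order $C$ exactly as $\sigma$ does, which contradicts the inconsistency of $S$.

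First I would observe that since $|C| = r+1$, the constraints contained in $R_\sigma[C]$ are precisely the $r+1$ sets of the form $C \setminus \{v\}$ for $v \in C$. Exactly one of them, say $S = C \setminus \{v_0\}$, is inconsistent w.r.t.\ $\sigma$, while the other $r$ constraints $C \setminus \{v_i\}$, $v_i \in C \setminus \{v_0\}$, are consistent w.r.t.\ $\sigma$. Because $\Pi$ is strongly fragile, each consistent constraint $C \setminus \{v_i\}$ is satisfied by a \emph{unique} ranking of its vertices, which must therefore be $\sigma$ restricted to $C \setminus \{v_i\}$. Hence if $\rho$ is consistent with $R[C]$, then $\rho$ and $\sigma$ induce the same order on $C \setminus \{v_i\}$ for every $v_i \neq v_0$.

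Next I would show that $\rho$ and $\sigma$ agree on the relative order of \emph{every} pair of vertices in $C$. For any pair $\{a,b\} \subseteq C$, I need some $v_i \neq v_0$ with $\{a,b\} \subseteq C \setminus \{v_i\}$, i.e.\ some $v_i \in C \setminus \{v_0, a, b\}$. When $v_0 \in \{a,b\}$ this set has at least $r-1 \geqslant 2$ elements, and when $v_0 \notin \{a,b\}$ it has at least $r-2 \geqslant 1$ elements — this is exactly where the hypothesis $r \geqslant 3$ is used. In either case such a $v_i$ exists and $\rho$ matches $\sigma$ on $\{a,b\}$. Consequently $\rho|_C = \sigma|_C$, so in particular $\rho$ orders $S$ as $\sigma$ does. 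But $\sigma$ does not satisfy $S$, so $\rho$ cannot satisfy $S$ either, contradicting the choice of $\rho$. Therefore $C$ is a conflict and $\Pi$ is $(r+1)$-simply characterized.

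The only delicate point — really the sole obstacle — is ensuring that every pair in $C$ is covered by at least one of the $r$ consistent constraints; this is precisely the counting step $r+1-3 \geqslant 1$ that fails for $r = 2$ and explains the $r \geqslant 3$ hypothesis.
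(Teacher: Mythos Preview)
Your proof is correct and follows essentially the same approach as the paper's: a contradiction argument using that strong fragility forces any consistent $\rho$ to agree with $\sigma$ on each of the $r$ consistent constraints $C\setminus\{v_i\}$, combined with the observation (requiring $r\geqslant 3$) that every pair in $C$ lies in at least one such constraint. The only difference is cosmetic: the paper picks a single pair $u,v\in S$ on which $\rho$ and $\sigma$ must differ and exhibits one consistent constraint $S'\neq S$ containing $\{u,v\}$ to reach the contradiction directly, whereas you push the covering argument all the way to $\rho|_C=\sigma|_C$ before concluding.
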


\begin{proof}
	Let $R_\sigma = (V, c, \sigma)$ be an ordered instance of $\Pi$, and $C$ be a 
	set of $r + 1$ vertices such that $R_\sigma[C]$ contains exactly one inconsistent constraint $S$. We need to prove that $C$ is a conflict. Assume for a contradiction 
	that this is not the case, \ie that there exists a ranking $\rho$ consistent 
	with $C$. In particular, there exist two vertices $u,v \in S$ such that 
	$u <_\sigma v$ and $v <_\rho u$. Let $S' \neq S$ be any constraint of $R_\sigma[C]$ 
	such that $\{u,v\} 
	\subset S'$ (observe that $S'$ is well-defined since $r \geqslant 3$). 
	Since $S'$ was consistent in $\sigma$ and since $\Pi$ 
	is strongly fragile, $S'$ is inconsistent in $\rho$: a contradiction.
 \end{proof}

\begin{corollary}
	Any strongly fragile ranking $r$-CSP admits a kernel with at most $(2 + \epsilon)k + (r + 1)$  
	vertices. 
\end{corollary}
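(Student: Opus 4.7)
The plan is to derive this corollary as a direct application of Theorem~\ref{thm:kernelSC}, using Lemma~\ref{lem:lcrfast} to obtain the simple characterization parameter and invoking the PTAS of Karpinski and Schudy~\cite{KS11} to supply the required constant-factor approximation.

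First I would observe that any strongly fragile ranking $r$-CSP is in particular fragile in the sense of~\cite{KS11}: if a constraint is satisfied by exactly one ranking of its $r$ vertices, then any single swap produces a different ranking and hence an unsatisfied constraint. Therefore the results of Karpinski and Schudy apply, and in particular such a problem admits a PTAS, \emph{i.e.}\ a $(1+\epsilon)$-factor approximation algorithm for every $\epsilon > 0$.

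Next, by Lemma~\ref{lem:lcrfast}, every strongly fragile ranking $r$-CSP (with $r \geqslant 3$) is $l_r$-simply characterized with $l_r = r + 1$. I would then plug these two ingredients into Theorem~\ref{thm:kernelSC}: with $q = 1 + \epsilon$ and $l_r - r = 1$, the kernel size guaranteed by the theorem is
\[
k\bigl[(q+1)(l_r - r)\bigr] + l_r = (2 + \epsilon)\,k + (r + 1),
\]
which is exactly the bound claimed. The case $r = 2$ reduces to \FAST{}, for which the bound is already known (and subsumed by the same computation once a PTAS is available).

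The only mildly delicate point is to make the reduction from strongly fragile to fragile explicit enough that one can genuinely cite~\cite{KS11} for the PTAS; this is immediate from the definitions, so no real obstacle arises. Everything else is a plug-and-play combination of the general framework developed in Section~\ref{sec:general} with the specific simple-characterization lemma proved just above.
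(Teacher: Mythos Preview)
Your proposal is correct and matches exactly the (implicit) argument the paper intends: the corollary follows immediately from Lemma~\ref{lem:lcrfast} (giving $l_r = r+1$) together with the PTAS of~\cite{KS11} (giving $q = 1+\epsilon$), plugged into Theorem~\ref{thm:kernelSC}. Your remark that strong fragility implies fragility, so that~\cite{KS11} indeed applies, is precisely the one point worth making explicit.
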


\subsection{\sc{$r$-Dense Feedback Arc Set} ($r \geqslant 3$)}
\label{subsec:rfasht}

As mentioned previously, the {\sc $r$-TFAST}  problem deals with constraints that are given by a transitive tournament and are thus satisfied by \emph{one particular ranking} and no other. To allow more freedom on the satisfiability of a constraint, we consider a different generalization of this problem, namely {\sc $r$-FAST}. Recall that, in this problem, any constraint $S$ contains a selected vertex $s$ and is satisfied by a ranking $\sigma$ if $u <_\sigma s$ for any $u \in S \setminus \{s\}$. Observe that {\sc $r$-FAST} is not (weakly-)fragile, since swapping the first two vertices of any consistent ranking yields a consistent ranking. Hence, we cannot directly apply the PTASs from~\cite{KS11}. 
\noindent However, the results needed to obtain a $5$-approximation for \textsc{Feedback Arc Set in Tournaments}~\cite{CFR06} can be generalized to the \FASHT{} problem.

\subsubsection{Approximation algorithm}

\begin{definition}
\label{def:indegree}
	Let $R = (V, c)$ be any instance of {\sc $r$-FAST}, and $v \in V$. The \emph{in-degree} of $v$ is the number of constraints where $v$ is selected.
\end{definition}

\noindent \textbf{Algorithm [{\sc Inc-Degree}]}~ Order the vertices of $R$ according to their increasing in-degrees.

\begin{theorem}
\label{thm:approx}
\textsc{Inc-Degree} is a $5$-approximation for {\sc $r$-FAST}.
\end{theorem}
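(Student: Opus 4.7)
The plan is to adapt the 5-approximation argument for \textsc{FAST} by Coppersmith, Fleischer and Rudra~\cite{CFR06} to the $r$-ary setting. Let $R = (V,c)$ be the instance, let $\pi$ be the ranking output by \textsc{Inc-Degree}, and let $\sigma^\star$ be an optimal ranking with $\mathrm{OPT}$ inconsistent constraints; the goal is to show that $\pi$ has at most $5 \cdot \mathrm{OPT}$ inconsistent constraints.

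First, I would characterize inconsistency under $\pi$: a constraint $S$ with selected vertex $s$ is inconsistent under $\pi$ iff there exists $u \in S \setminus \{s\}$ with $s <_\pi u$. Call such $(s,u)$ a \emph{violating pair}. Since \textsc{Inc-Degree} sorts by increasing in-degree, any violating pair satisfies $d(s) \leq d(u)$, which is the $r$-ary analogue of the ``backward arc'' used in~\cite{CFR06}: $u$ is selected in at least as many constraints as $s$, yet $s$ (not $u$) is selected in $S$. This sort inequality is what forces some other constraint to ``pay'' for each violating pair.

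Next, I would set up the charging scheme. Each inconsistent constraint under $\pi$ contains at least one violating pair. For each such pair $(s,u)$, I would identify the set of $r$-sized \emph{witnesses}, namely constraints $S'$ containing both $s$ and $u$ whose selection pattern, together with $d(s) \leq d(u)$, forces them to be inconsistent in every ranking (and therefore in $\sigma^\star$). Summing the witnesses over all violating pairs gives a lower bound on $\mathrm{OPT}$; summing, on the other side, the number of violating pairs a single constraint inconsistent in $\sigma^\star$ can absorb gives an upper bound. Combining the two, exactly as in~\cite{CFR06}, yields the factor $5$.

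The main obstacle will be the witness-counting step. In \textsc{FAST} a backward arc $uv$ is paid for by the $3$-cycles through $(u,v)$, and the combinatorics reduce to counting triangles in a tournament. In $r$-FAST one must replace triangles by $r$-sized constraints combining $(s,u)$ with a third vertex $w$ playing the role of the cycle's apex, and verify that the local selection patterns around $(s,u,w)$ yield the same inequality chain. I expect this to go through by fixing, for each violating pair $(s,u)$, a canonical ``third coordinate'' $w \in V \setminus \{s,u\}$ and applying the FAST counting verbatim to the induced sub-instance; checking that this reduction is lossless, and that the resulting constant remains $5$ rather than degrading with $r$, is the delicate part that will need to be carried out carefully.
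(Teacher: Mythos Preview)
Your proposal does not match the paper's proof, and it has a genuine gap.

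The paper's argument (which \emph{is} the Coppersmith--Fleischer--Rudra argument, generalized to arity $r$) is a potential-function / $L^1$ argument, not a local charging scheme. For a ranking $\rho$, let $l_\rho(v)$ be the number of constraints in which $v$ is the rightmost vertex, and let $In(v)$ be the in-degree. The paper proves three lemmas: (i) $2\,b_\rho \geqslant \sum_{v} |l_\rho(v) - In(v)|$ for every ranking $\rho$; (ii) sorting by increasing in-degree minimises $\sum_v |l_\rho(v) - In(v)|$; (iii) $\sum_v |l_\rho(v) - l_\gamma(v)| \geqslant |b_\rho - b_\gamma|$ for any two rankings $\rho,\gamma$ (via a generalized Kendall--Tau inequality). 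Chaining these through the triangle inequality gives $4\,b_{\sigma^\star} \geqslant b_\pi - b_{\sigma^\star}$, hence $b_\pi \leqslant 5\,b_{\sigma^\star}$. There are no violating pairs, no $3$-cycles, and no per-constraint charging.

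Your approach breaks at the witness step. You ask, for each violating pair $(s,u)$, for constraints $S'$ that are ``forced to be inconsistent in every ranking (and therefore in $\sigma^\star$).'' But in $r$-FAST no single constraint is inconsistent in every ranking: placing its selected vertex last always satisfies it. The objects that are unavoidable are \emph{conflicts}, which involve at least $r+1$ vertices and several constraints simultaneously; it is not at all clear how to charge a violating pair to a conflict so that the overcounting factor stays bounded independently of $r$. The fallback plan of ``fixing a canonical third coordinate $w$ and applying the FAST counting verbatim'' does not obviously work either: an $r$-FAST constraint on $\{s,u,w,\ldots\}$ carries $r-3$ further vertices whose relative positions with $s$ affect satisfaction, so the structure induced on $\{s,u,w\}$ is not a FAST instance and the FAST triangle count does not transfer. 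Finally, note that the CFR $5$-approximation for FAST itself is already the potential-function proof above, not a $3$-cycle charging argument; you may be conflating it with the KwikSort analysis, which yields a different constant by a different mechanism.
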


We prove Theorem~\ref{thm:approx} by proving a series of Lemmata. 
For the sake of simplicity, we let $V = \{1, \ldots, n\}$ in the remaining of this Section. For any vertex $v$ of $V$, we call \emph{left constraint} (resp. \emph{above constraint}) any constraint containing $v$ and vertices before (resp. after or both before and after) $v$, and let $L_\sigma(v)$ (resp. $A_\sigma(v)$) be the set of left constraints (resp. above constraints) of $v$. Moreover, we set $l_\sigma(v) = |L_\sigma(v)|$. Finally, given an ordered instance 
$R_\sigma = (V, c, \sigma)$, we define $\mathcal{B}_\sigma$ as the set of inconsistent constraints of $R_\sigma$, and let $b_\sigma = |\mathcal{B}_\sigma|$. 
We need to define a distance function between two rankings $\rho$ and $\gamma$:

\begin{eqnarray*}
	\huge \mathcal{K}(\rho,\gamma) = \displaystyle\sum_{S \subseteq V, |S| = r} \mathds{1}_{(c(\rho(S)) = 1 \land c(\gamma(S)) = 0) \lor (c(\rho(S)) = 0) \land c(\gamma(S)) = 1)}
\end{eqnarray*}

where $\mathds{1}$ denotes the indicative function. This distance gives the number of constraints which are consistent in exactly one out of the two rankings, and thus generalizes the Kendall-Tau distance between two rankings~\cite{CFR06}.

\begin{lemma}
\label{lem:approxone}
Let $\rho : V \rightarrow V$ be any ranking. The following holds:
$$2 \cdot b_\rho \geqslant \displaystyle\sum_{v \in V} |l_\rho(v) - In(v)|$$
\end{lemma}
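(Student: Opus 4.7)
My plan is to prove the inequality by classifying, at each vertex $v$, the constraints counted by $In(v)$ and by $l_\rho(v)$ according to whether they are consistent or inconsistent w.r.t. $\rho$, and then observing that the consistent contributions to these two quantities are \emph{identical}.

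The key observation is that a constraint $S$ is consistent w.r.t. $\rho$ iff its selected vertex $s$ is the rightmost element of $S$ in $\rho$, i.e. iff $S \in L_\rho(s)$. Hence if I write $In(v) = In^{c}(v) + In^{i}(v)$ and $l_\rho(v) = l_\rho^{c}(v) + l_\rho^{i}(v)$, splitting each according to consistency, then $In^{c}(v) = l_\rho^{c}(v)$: both count exactly the consistent constraints whose selected (equivalently, rightmost-in-$\rho$) vertex is $v$. Consequently
\begin{equation*}
l_\rho(v) - In(v) = l_\rho^{i}(v) - In^{i}(v),
\end{equation*}
and by the triangle inequality $|l_\rho(v) - In(v)| \leqslant l_\rho^{i}(v) + In^{i}(v)$.

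Summing over $v \in V$ gives $\sum_{v} |l_\rho(v) - In(v)| \leqslant \sum_v l_\rho^{i}(v) + \sum_v In^{i}(v)$. Now each inconsistent constraint $S \in \mathcal{B}_\rho$ contributes exactly once to $\sum_v In^{i}(v)$ (namely at its selected vertex $s$) and exactly once to $\sum_v l_\rho^{i}(v)$ (namely at the rightmost vertex of $S$ in $\rho$, which by inconsistency is distinct from $s$ but is well-defined and unique). Therefore both sums equal $b_\rho$, and the desired bound $\sum_v |l_\rho(v) - In(v)| \leqslant 2 b_\rho$ follows.

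The proof is essentially a double-counting argument, so no real obstacle is expected; the only subtle point is to carefully justify that \emph{every} constraint in $L_\rho(v)$ that happens to be consistent must have $v$ as its selected vertex (ensuring the cancellation of the consistent parts), which is immediate from the definition of \textsc{$r$-FAST}'s satisfaction rule.
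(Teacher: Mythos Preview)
Your proof is correct and is essentially the same double-counting argument as the paper's: your quantities $In^{c}(v)=l_\rho^{c}(v)$, $In^{i}(v)$, and $l_\rho^{i}(v)$ coincide exactly with the paper's $W^-_L(v)$, $W^-_R(v)$, and $W^+_L(v)$, and both proofs finish by observing that each inconsistent constraint is counted once at its selected vertex and once at its $\rho$-rightmost vertex. Your phrasing via the cancellation $In^{c}(v)=l_\rho^{c}(v)$ is arguably a bit cleaner than the paper's algebraic manipulation $l_\rho(v)+In(v)-2W^-_L(v) = |l_\rho(v)-In(v)| + 2(\min\{l_\rho(v),In(v)\}-W^-_L(v))$, but the content is identical.
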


\begin{proof}
Let $v \in V$ be any vertex. We set: 
\[
	\left\{
		\begin{array}{ll}
			W^-_L(v) = |\{S \in L_\rho(v) : v = sel(S)\}| \\
			W^+_L(v) = |\{S \in L_\rho(v) : v \neq sel(S)\}| \\
			W^-_R(v) = |\{S \in A_\rho(v) : v = sel(S)\}| \\
		\end{array}
	\right.
\]

These numbers respectively represent the left constraints where $v$ is the selected vertex, those where $v$ is not the selected vertex and finally the above constraints where $v$ is the selected vertex. Observe that in the last two cases, the considered constraints are inconsistent. 
By definition, we have that $W^-_L(v) + W^-_R(v) = In(v)$. Moreover, we have $ l_\rho(v) = W^+_L(v) + W^-_L(v)$. Now, observe that:
$$ 2 \cdot b_\rho = \displaystyle\sum_{v \in V} ( W^+_L(v) + W^-_R(v) )$$
To see this, observe that any inconsistent constraint $S = \{s_1, \ldots, s_r\}$ with $s_i <_\rho s_{i+1}$, $1 \leqslant i < r$, will be counted exactly twice in the sum: $(i)$ $S$ is counted in $W^-_R(sel(S))$ and $(ii)$ in $W^+_L(s_r)$, and these are the only vertices for which $S$ will be taken into account.

We conclude the proof by showing that $W^+_L(v) + W^-_R(v) \geqslant |l_\rho(v) - In(v) |$. By the previous observations, we have :
\begin{eqnarray*}
	W^+_L(v) + W^-_R(v) & = & l_\rho(v) + In(v) - 2W^-_L(v) \\
			    & = & |l_\rho(v) - In(v)| + 2(min\{l_\rho(v), In(v)\} - W^-_L(v)) \\
			    & \geqslant & |l_\rho(v) - In(v) |
\end{eqnarray*}
 \end{proof}

In the following, we denote by $\sigma_\mathcal{A}$ the ranking returned by \textsc{Inc-Degree}, and by $\sigma_\mathcal{O}$ the ranking returned by any optimal solution.

\begin{lemma}
\label{lem:approxtwo}
Let $\rho : V \rightarrow V$ be any ranking. The following holds:
$$\displaystyle\sum_{v \in V} |l_\rho(v) - In(v)| \geqslant \displaystyle\sum_{v \in V} | l_{\sigma_\mathcal{A}}(v) - In(v)|$$
\end{lemma}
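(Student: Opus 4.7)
\medskip
\noindent\textbf{Proof plan.} The first step is to observe that since the instance is dense, the quantity $l_\rho(v)$ depends only on the position of $v$ in $\rho$. Indeed, if $p_\rho(v) \in \{1, \ldots, n\}$ denotes that position, then $L_\rho(v)$ is precisely the set of $r$-subsets containing $v$ whose other $r-1$ vertices are chosen from the $p_\rho(v)-1$ vertices lying before $v$, and all such subsets are constraints. Hence $l_\rho(v) = \binom{p_\rho(v)-1}{r-1}$. Setting $a_i := \binom{i-1}{r-1}$, the sequence $(a_1, \ldots, a_n)$ is non-decreasing and, regardless of the ranking $\rho$, the multiset $\{l_\rho(v) : v \in V\}$ equals $\{a_1, \ldots, a_n\}$. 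Choosing a ranking $\rho$ is therefore equivalent to choosing a bijection $\phi_\rho : V \to \{1, \ldots, n\}$, and we wish to minimise
$$ F(\phi) \;=\; \sum_{v \in V} |\,a_{\phi(v)} - In(v)\,| \;=\; \sum_{i=1}^{n} |\,a_i - In(\phi^{-1}(i))\,| $$
over all bijections $\phi$.

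\medskip
The second step is a standard rearrangement argument. Suppose $\phi$ is not such that $In(\phi^{-1}(i))$ is non-decreasing in $i$; then there exist $i < j$ with $In(\phi^{-1}(i)) > In(\phi^{-1}(j))$. Writing $\alpha = In(\phi^{-1}(i))$ and $\beta = In(\phi^{-1}(j))$, so that $a_i \leqslant a_j$ and $\alpha \geqslant \beta$, the elementary inequality
$$ |a_i - \alpha| + |a_j - \beta| \;\geqslant\; |a_i - \beta| + |a_j - \alpha| $$
shows that swapping the images of $\phi^{-1}(i)$ and $\phi^{-1}(j)$ does not increase $F$. Iterating this swap, one reaches a bijection under which $In(\phi^{-1}(i))$ is non-decreasing in $i$, and $F$ is minimised there. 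By definition, \textsc{Inc-Degree} orders the vertices by non-decreasing in-degree, i.e. $\sigma_\mathcal{A}$ is precisely such a minimiser. Therefore $F(\phi_\rho) \geqslant F(\phi_{\sigma_\mathcal{A}})$ for every ranking $\rho$, which is the desired inequality. The only mildly delicate point is handling ties in the $a_i$'s (positions $1, \ldots, r-1$ all give $a_i = 0$) and in the $In(v)$'s, but these ties simply mean that several minimisers exist; any sorted order, and in particular the one produced by \textsc{Inc-Degree}, still attains the minimum.
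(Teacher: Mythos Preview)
Your proof is correct and follows essentially the same exchange-argument strategy as the paper: both show that swapping a pair that is out of order with respect to in-degree cannot increase the sum, so a ranking sorted by in-degree is optimal. Your execution is slightly cleaner, since by first writing $l_\rho(v) = \binom{p_\rho(v)-1}{r-1}$ explicitly you can appeal to the one-line rearrangement inequality $|a_i-\alpha|+|a_j-\beta|\geqslant |a_i-\beta|+|a_j-\alpha|$ (for $a_i\leqslant a_j$ and $\alpha\geqslant\beta$) in place of the paper's case analysis on whether the swapped positions lie below, at, or above $r$.
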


\begin{proof}
Observe that this result is trivial if $\rho$ orders the vertices by increasing in-degrees. So assume this is not the case. This means that there exists $i \in V$ such that $In(u) > In(v)$ and $u = \rho^{-1}(i)$, $v = \rho^{-1}(i+1)$. We build a new ranking $\rho'$ from $\rho$ by swapping $u$ and $v$. We now prove the following (the result will follow by induction): 
$$\displaystyle\sum_{v \in V} ( |l_\rho(v) - In(v) | - |l_{\rho'}(v) - In(v) |) \geqslant 0$$
To see this, notice that for every vertex $w \notin \{u,v\}$, we have $\rho(w) = \rho'(w)$, and hence the only terms non equal to $0$ are obtained in $u$ and $v$. This gives :
$$|l_{\rho}(u) - In(u) | - |l_{\rho'}(u) - In(u) | + |l_\rho(v) - In(v) | - |l_\rho'(v) - In(v) |$$
There are now three main cases to study. Observe that we have $\rho(v) = i + 1$, $\rho(u) = i$, $\rho'(v) = i$ et $\rho'(u) = i + 1$, and hence $l_\rho(u) = l_{\rho'}(v)$, $l_\rho(v) = l_{\rho'}(u)$ and $l_\rho(v) > l_\rho(u)$. 

\begin{enumerate}[(i)]
	\item $i + 1 < r$ : in this case, we obtain $0$ since $l_\rho(u) = l_{\rho'}(u) = 0$.
	\item $i + 1 = r$ : here, we have $ |-In(u)| - |1 - In(u)| + |1 - In(v)| - |-In(v)|$ = $(In(u) - In(v)) + (|1 - In(v)| - |1 - In(u)|)$. One can see that such a term is equal to $0$ if $In(u) > In(v) \geqslant 1$, and to $2$ if $In(u) > In(v) = 0$.
	\item $i + 1 > r$ : there we have $|l_\rho(u) - In(u)| - |l_{\rho'}(u) - In(u)| + |l_\rho(v) - In(v)| - |l_{\rho'}(v) - In(v)|$, which in turn is equal to :

\begin{eqnarray*}
	2\cdot \Big(min\big\{l_{\rho'}(v), In(v)\big\} - min\big\{l_\rho(u),In(u)\big\}\Big) \\ 
	+\ 2\cdot \Big(min\big\{l_{\rho'}(u),In(u)\big\} - min\big\{l_\rho(v),In(v)\big\}\Big)
\end{eqnarray*}

We now have to study several cases to be able to conclude (observe that $l_\rho(u) = l_{\rho'}(v)$ and $l_{\rho'}(u) = l_\rho(v)$):
\begin{enumerate}[(1)]
	\item $ l_\rho(u)  \geqslant In(u)$ : then the term is equal to $0$.
	\item $In(v) \geqslant  l_\rho(u)$ : then the term is equal to $2(min\{l_{\rho'}(u),In(u)\} - min\{l_\rho(v),$ $In(v)\})$. If $In(u) \leqslant l_{\rho'}(u)$, the term is equal to $2(In(u) - In(v))$, which is positive. If $In(u) > In(v) > l_\rho(v)$, the whole sum is again equal to $0$. Finally, if $In(v) \leqslant l_{\rho'}(u) < In(u)$, then the term is $2(l_{\rho'}(u) - In(v))$ which is again positive.
	\item $In(u) > l_\rho(u) > In(v)$ : the first term is $2(In(v) - l_\rho(u))$. If $In(u) > l_{\rho'}(u)$, then the sum is $2(In(v) - l_\rho(u)) + 2(l_{\rho'}(u) - In(v))$ which is non negative. Otherwise, the sum is $2(In(v) - l_\rho(u)) - 2(In(u) - In(v))$, which is equal to $2(In(u) - l_\rho(u))$, positive by hypothesis.
\end{enumerate}
\end{enumerate}
 \end{proof}

\begin{lemma}
\label{lem:approxthree}
Let $\rho, \gamma : V \rightarrow V$ be two rankings. The following holds:
$$\displaystyle{\displaystyle\sum_{v \in V}} |l_\rho(v) - l_\gamma(v)| \geqslant |b_\rho - b_\gamma|$$
\end{lemma}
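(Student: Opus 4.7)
The plan is to prove the inequality by induction on the Kendall--tau distance between $\rho$ and $\gamma$, mimicking the adjacent-swap strategy used in the proof of Lemma~\ref{lem:approxtwo}. The base case $\rho = \gamma$ is immediate since both sides vanish. For the inductive step, I would pick $u, v$ adjacent in $\rho$, say $u = \rho^{-1}(i)$ and $v = \rho^{-1}(i+1)$, such that $\gamma(u) > \gamma(v)$ (an inversion relative to $\gamma$; such a pair exists whenever $\rho \neq \gamma$), and let $\rho'$ be the ranking obtained by swapping $u$ and $v$, so that $\rho'$ is strictly closer to $\gamma$ in Kendall--tau distance.

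The heart of the inductive step consists of two local estimates. First, since only $u$ and $v$ move, only $l_\rho(u)$ and $l_\rho(v)$ change: $l_{\rho'}(u) = l_\rho(v) = \binom{i}{r-1}$ and $l_{\rho'}(v) = l_\rho(u) = \binom{i-1}{r-1}$, so $\sum_w |l_\rho(w) - l_{\rho'}(w)| = 2\binom{i-1}{r-2}$. Second, a constraint can change status between $\rho$ and $\rho'$ only when it contains both $u$ and $v$ and its remaining $r-2$ vertices all lie in the first $i-1$ positions; there are exactly $\binom{i-1}{r-2}$ such constraints, so $|b_\rho - b_{\rho'}| \leq \binom{i-1}{r-2}$. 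Combined with the inductive hypothesis applied to $(\rho', \gamma)$ and the triangle inequality,
\[
  |b_\rho - b_\gamma| \;\leq\; |b_\rho - b_{\rho'}| + |b_{\rho'} - b_\gamma| \;\leq\; \binom{i-1}{r-2} + \sum_w |l_{\rho'}(w) - l_\gamma(w)|,
\]
so it remains to show that the swap decreases $\sum_w |l_\eta(w) - l_\gamma(w)|$ by at least $\binom{i-1}{r-2}$, i.e., that $\sum_w |l_\rho - l_\gamma| - \sum_w |l_{\rho'} - l_\gamma| \geq \binom{i-1}{r-2}$.

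I would establish this reduction by a case analysis on $i$, paralleling the one used in the proof of Lemma~\ref{lem:approxtwo}: Case~(i) $i+1 < r$ is trivial, since then $\binom{i-1}{r-2} = 0$ and $l_\rho(u)=l_\rho(v)=l_{\rho'}(u)=l_{\rho'}(v)=0$; Case~(ii) $i+1 = r$ is a direct computation involving a single relevant constraint; and Case~(iii) $i+1 > r$ requires splitting on the relative positions of $l_\gamma(u), l_\gamma(v)$ with respect to $\binom{i-1}{r-1}, \binom{i}{r-1}$ (noting that $\gamma(u)>\gamma(v)$ implies $l_\gamma(u)\geq l_\gamma(v)$). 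The hardest part will be the subcases of~(iii) in which neither $l_\gamma(u) \geq \binom{i}{r-1}$ nor $l_\gamma(v) \leq \binom{i-1}{r-1}$ holds, because then the pointwise reduction can be strictly smaller than $\binom{i-1}{r-2}$; the induction will have to be closed by choosing the adjacent inversion carefully (for instance, the leftmost one, so that $u,v$ are well-aligned with their $\gamma$-positions) and exploiting the conservation identity $\sum_w (l_\rho(w) - l_\gamma(w)) = 0$ to redistribute any missing slack.
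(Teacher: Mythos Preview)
Your adjacent-swap induction is a natural idea, but the inductive step has a genuine gap. The claim that a well-chosen swap reduces $\sum_w |l_\rho(w)-l_\gamma(w)|$ by at least $\binom{i-1}{r-2}$ can fail even for the leftmost adjacent inversion. Writing $p=\binom{i-1}{r-1}$, $q=\binom{i}{r-1}$, $a=l_\gamma(u)$, $b=l_\gamma(v)$, the decrease produced by the swap equals $(|p-a|+|q-b|)-(|q-a|+|p-b|)$, which is always $\geqslant 0$ but equals $0$ whenever both $a,b\geqslant q$ (or both $\leqslant p$). For instance, with $r=3$, $\gamma$ the identity on $\{1,\dots,6\}$, and $\rho$ the sequence $1,6,5,2,3,4$, the leftmost adjacent inversion sits at positions $(2,3)$; there $p=0$, $q=1$, $a=l_\gamma(6)=\binom{5}{2}=10$, $b=l_\gamma(5)=\binom{4}{2}=6$, and the decrease is $0<1=\binom{1}{1}$. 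Neither choosing the leftmost inversion nor invoking the conservation identity $\sum_w(l_\rho(w)-l_\gamma(w))=0$ repairs this without a further amortisation argument that your sketch does not supply; as it stands, the induction does not close.

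The paper avoids swap induction entirely. It routes the inequality through the distance $\mathcal{K}(\rho,\gamma)$, the number of constraints consistent in exactly one of the two rankings: trivially $|b_\rho-b_\gamma|\leqslant\mathcal{K}(\rho,\gamma)$, and a direct per-vertex counting argument then shows $\sum_v|l_\rho(v)-l_\gamma(v)|\geqslant\mathcal{K}(\rho,\gamma)$. For each $v$ with (say) $\rho(v)>\gamma(v)$, one expands $|l_\rho(v)-l_\gamma(v)|=\binom{\rho(v)-1}{r-1}-\binom{\gamma(v)-1}{r-1}$ via Vandermonde and checks that this upper-bounds the number of constraints whose change of status between $\rho$ and $\gamma$ can be charged to the movement of $v$; every constraint contributing to $\mathcal{K}(\rho,\gamma)$ is charged to some vertex. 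This global charging bypasses the per-swap deficit that stalls your induction.
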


\begin{proof}
We consider the set of constraints which are inconsistent in $\rho$ but not in $\gamma$ and denote it by $\mathcal{B}_{\rho \setminus \gamma}$. Similarly, we consider $\mathcal{B}_{\gamma \setminus \rho}$. First observe that:
$$\displaystyle\sum_{S \in \mathcal{B}_{\rho \setminus \gamma}} 1 + \displaystyle\sum_{S \in \mathcal{B}_{\gamma \setminus \rho}} 1 \geqslant |b_\gamma - b_\rho|$$

Moreover, we know by definition that $K(\rho, \gamma) \geqslant \displaystyle\sum_{S \in \mathcal{B}_{\rho \setminus \gamma}} 1 + \displaystyle\sum_{S \in \mathcal{B}_{\gamma \setminus \rho}} 1$. We need the following result. 

\begin{claim}
\label{claim:tau}
	The following holds: $\displaystyle{\displaystyle\sum_{v \in V}} |l_\rho(v) - l_\gamma(v)| \geqslant \mathcal{K}(\rho, \gamma) $. 

\end{claim}

\begin{proofclaim}
	Let $v \in V$ be any vertex. First, observe that since we are considering a dense instance, we have (assuming w.l.o.g. that $\rho(v) > \gamma(v)$): 
	
	$$
		\begin{array}{ll}
		|l_\rho(v) - l_\gamma(v)| & = \displaystyle { \rho(v) - 1 \choose r - 1 } -  { \gamma(v) -1 \choose r - 1 }  \\ 
		& \\
								  & = \displaystyle\sum_{i=0}^{r-2}  { \gamma(v) - 1 \choose i }  \cdot { \rho(v) - \gamma(v) \choose (r - 1) - i } 
								  \end{array}
	$$
	
	We now count the number of constraints whose consistency may have been modified by the change of position of $v$ in $\rho$ and $\gamma$. Let $S'$ be the set $\{v' \in V : \rho(v') < \rho(v)\}$. Moreover, we define the following sets: 
	
	% TODO: DEFINE \rho(v) AND \gamma(v)
	\begin{minipage}{0.4 \linewidth}
	$$
		\begin{array}{ll}
		S_1 & = \{ s' \in S' : \gamma(s') < \gamma(v) \} \\
		S_2 & = \{ s' \in S' : \gamma(v) \leqslant \gamma(s') \leqslant \rho(v) \} \\
		S_3 & = S' \setminus (S_1 \cup S_2)
	\end{array}
	$$
	\end{minipage}
	\begin{minipage}{0.6 \linewidth}
		\centerline{\includegraphics[scale=1.75]{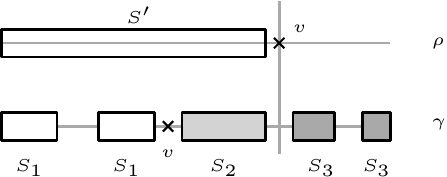}}
		\captionof{figure}{$S_1$, $S_2$ and $S_3$ defined according to $v \in V$.}
		\label{fig:kendall}
	\end{minipage}
	\vspace{0.4cm}
	
	Let $\mathcal{S}_v$ be the set of constraints of $\subseteq S_1 \cup S_2 \cup \{v\}$ whose consistency have been modified between the two rankings due to the change of position of $v$. Observe that any constraint $S$ in $\mathcal{S}_v$ must contain $v$. Moreover, $S£$cannot contain $r - 1$ vertices in $S_1$, since otherwise its consistency is not modified by moving $v$ (which would be the leftmost vertex of $S$ in both $\rho$ and $\gamma$). Since we are on a dense instance, we know that: 
	
	$$
		\begin{array}{ll}
		|\mathcal{S}_v| & \leqslant \displaystyle\sum_{i=0}^{r-2}  { \gamma(v) - 1 \choose i }  \cdot { \rho(v) - \gamma(v) \choose (r - 1) - i } \\
			& \\
						& \leqslant |l_\rho(v) - l_\gamma(v)|
		\end{array}
	$$
	
	A similar counting argument can be applied when $\rho(v) < \gamma(v)$ (observe that the consistency of any constraint is not modified by $v$ when $\rho(v) = \gamma(v)$). Altogether, we thus have: 
	
	$$
		\sum_{v \in V} |\mathcal{S}_v| \leqslant \displaystyle{\displaystyle\sum_{v \in V}} |l_\rho(v) - l_\gamma(v)|
	$$
	
	To conclude the proof, we show that $\sum_{v \in V} |\mathcal{S}_v| \geqslant \mathcal{K}(\rho, \gamma)$. To see this, notice that any constraint that belongs to $\mathcal{K}(\rho, \gamma)$ must contain two vertices $u$ and $v$ such that (w.l.o.g.) $u <_\rho v$ and $v <_\gamma u$, with the additional property that $S \in L_{\rho}(v)$ and $S \in L_\gamma(u)$. Indeed, by definition of \rFAST{}, the leftmost vertex of $S$ in $\rho$ cannot be the leftmost vertex of $S$ in $\gamma$, since otherwise we would have $c(\rho(S)) = c(\gamma(S))$. In particular, this means that $S$ will be counted in (at least) $\mathcal{S}_{max\{\rho(v), \gamma(u)\}}$. 
\end{proofclaim}

This concludes the proof of Lemma~\ref{lem:approxthree}.
 \end{proof}

We are now ready to prove the main result of this section :\\

\emph{Proof of Theorem~\ref{thm:approx}.} By the previous Lemmata, we have the following :
			$$
				\begin{array}{ll}
				\displaystyle 4b_{\sigma_\mathcal{O}} & \geqslant \sum_{v \in V} | l_{\sigma_\mathcal{O}}(v) - In(v) | + \sum_{v \in V} | l_{\sigma_\mathcal{O}}(v) - In(v) | \\
													& \geqslant \sum_{v \in V} | l_{\sigma_\mathcal{O}}(v) - In(v) | + \sum_{v \in V} | l_{\sigma_\mathcal{A}}(v) - In(v) | \\
													&  =  \sum_{v \in V} (| l_{\sigma_\mathcal{O}}(v) - In(v) | + | l_{\sigma_\mathcal{A}}(v) - In(v) |) \\
													& \geqslant \sum_{v \in V} | l_{\sigma_\mathcal{O}}(v) - l_{\sigma_\mathcal{A}}(v) | \\
													& \geqslant b_{\sigma_\mathcal{A}} - b_{\sigma_\mathcal{O}}
			\end{array}
			$$
			Hence we have $b_{\sigma_\mathcal{A}} \leqslant 5b_{\sigma_\mathcal{O}}$, which implies the result. 
\hfill \qed

\subsubsection{Kernelization algorithm}

In order to design our kernelization algorithm for this problem, we need to study the topology of conflicts that contain exactly one inconsistent constraint. As we shall see, the configuration for \FASHT{} is slightly different to the ones previously observed. In particular, as we shall see, the problem is not $l_r$-simply characterized. However, the addition of a new reduction rule will allow us to conclude as in the other cases.  

\begin{lemma}
\label{lem:fashtconflict}
	Let $R_\sigma = (V, c, \sigma)$ be an ordered instance of {\sc $r$-FAST}, and $C = \{s_1, \ldots, s_{r+1}\}$ be a set of $r + 1$ vertices such that $s_i <_\sigma s_{i+1}$ for every $1 \leqslant i \leqslant r$. Assume $R_\sigma[C]$ contains exactly one inconsistent constraint $S$. Then $C$ is a conflict iff $S$ is unconsecutive or $S = \{s_2, \ldots, s_{r+1}\}$.
\end{lemma}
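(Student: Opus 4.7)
The plan is to run a case analysis on the position within $C$ of the unique inconsistent constraint $S$. Writing $S = C \setminus \{s_j\}$ for some $j \in \{1, \ldots, r+1\}$, three cases arise: $j = r+1$ (i.e.\ $S = \{s_1, \ldots, s_r\}$, the only case where $C$ should \emph{not} be a conflict), $j = 1$ (i.e.\ $S = \{s_2, \ldots, s_{r+1}\}$), and $1 < j < r+1$ ($S$ is unconsecutive). In each case, I will exploit the fact that every other constraint $C \setminus \{s_k\}$ is $\sigma$-consistent, and hence has $sel(C \setminus \{s_k\}) = s_{r+1}$ when $k \neq r+1$, and $sel(C \setminus \{s_{r+1}\}) = s_r$ otherwise.

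For the ``not a conflict'' direction, that is $S = \{s_1, \ldots, s_r\}$, I plan to exhibit an explicit consistent ranking $\rho$ of $C$. Writing $sel(S) = s_i$ with $i < r$ (such an $i$ exists since $S$ is $\sigma$-inconsistent and its $\sigma$-rightmost vertex is $s_r$), I place $s_{r+1}$ in position $r+1$, $s_i$ in position $r$, and the remaining $r - 1$ vertices arbitrarily in the first $r-1$ positions. Then $S$ is satisfied because $s_i$ is $\rho$-rightmost in $\{s_1, \ldots, s_r\}$, and each other constraint $C \setminus \{s_k\}$ (with $k \leq r$) has $sel = s_{r+1}$, which is $\rho$-rightmost in $C$ and hence in every subset.

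For the ``conflict'' direction, my plan is to derive a contradiction from any supposed consistent ranking $\rho$ of $C$. The idea is that the $\sigma$-consistent constraints whose $\sigma$-rightmost vertex is $s_{r+1}$ force $s_{r+1}$ to be $\rho$-rightmost in each of them; intersecting two such constraints $C \setminus \{s_{k_1}\}$ and $C \setminus \{s_{k_2}\}$ with $k_1 \neq k_2$ covers $C \setminus \{s_{r+1}\}$, so $s_{r+1}$ must be $\rho$-rightmost in all of $C$. This contradicts the requirement coming from $S$: since $S \neq \{s_1, \ldots, s_r\}$ we have $s_{r+1} \in S$, while $sel(S)$ is some $s_i$ with $i \leq r$, so $s_i$ should be $\rho$-rightmost in $S$. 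The main obstacle is ensuring that two such intersectable constraints always exist. In the unconsecutive case they are $\{C \setminus \{s_k\} : k \notin \{j, r+1\}\}$, and in the case $S = \{s_2, \ldots, s_{r+1}\}$ they are $\{C \setminus \{s_k\} : 2 \leq k \leq r\}$; in both situations there are $r - 1 \geq 2$ of them, so the hypothesis $r \geq 3$ from the section heading is precisely what makes the counting argument go through.
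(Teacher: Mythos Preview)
Your proposal is correct and follows essentially the same case analysis as the paper's proof. The only stylistic difference is in the conflict direction: the paper picks a single consistent constraint $C\setminus\{s_j\}$ containing both $sel(S)$ and $s_{r+1}$ to derive the direct contradiction $sel(S) <_\rho s_{r+1}$ versus $s_{r+1} <_\rho sel(S)$, whereas you use two such constraints to first establish that $s_{r+1}$ is globally $\rho$-rightmost in $C$; both arguments rely on $r \geqslant 3$ for the needed constraints to exist, and your explicit ranking in the non-conflict case is exactly the paper's ``swap $sel(S)$ and $s_r$'' move.
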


\begin{proof}
	Assume first that $S$ is unconsecutive (\ie there exists $2 \leqslant i \leqslant r$ such that $S = C \setminus \{s_i\}$). Since any constraint different from $S$ is consistent w.r.t. $\sigma$, it follows that $sel(C \setminus \{s_j\}) = s_{r+1}$ for any $1 \leqslant j \leqslant r,\ j \neq \{l,i\}$. Together with the fact that $sel(S) = s_l$ for $l < r+1$, this implies that any consistent ranking $\rho$ must verify $s_l <_\rho s_{r+1}$ and $s_{r+1} <_\rho s_l$ which cannot be.
	We now deal with the case where $S$ is induced by $\{s_2, \ldots, s_{r+1}\}$. Once again, this means that $sel(S) = s_l$, $2 \leqslant l < r +1$, and $sel(C \setminus \{s_j\}) = s_{r+1}$ for any $2 \leqslant j \leqslant r,\ j \neq l$, which is inconsistent with any ranking. 
	Finally, assume that $S$ is induced by $\{s_1, \ldots, s_r\}$. Now, by swapping  
	the vertices $sel(S)$ and $s_r$ (notice that there is no constraint where $s_r$ is 
	the selected vertex), we 
	obtain a consistent ranking, implying that $C$ is not a conflict in this case. 
 \end{proof}
	
\begin{corollary}
\label{coro:fashtlr}
	There does not exist $l_r \in \mathbb{N}$ such that {\sc $r$-FAST} is $l_r$-simply characterized. 
\end{corollary}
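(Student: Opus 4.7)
The plan is to refute the simple characterization for every meaningful arity $l_r \geqslant r+1$ by exhibiting an explicit ordered instance $R_\sigma = (V, c, \sigma)$ together with a set $C \subseteq V$ of size $l_r$ such that $R_\sigma[C]$ contains exactly one inconsistent constraint but $C$ is not a conflict. The construction mimics the ``good case'' in the proof of Lemma~\ref{lem:fashtconflict}, where swapping the selected vertex with the right endpoint of a consecutive, leftmost inconsistent constraint yields a consistent ranking.

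Concretely, I would take $V = \{s_1, \ldots, s_{l_r}\}$ ordered by $\sigma$ as $s_1 <_\sigma \cdots <_\sigma s_{l_r}$, set $C = V$, and define the constraint system as follows: for $S = \{s_1, \ldots, s_r\}$, let $sel(S) = s_1$ (making $S$ inconsistent w.r.t.\ $\sigma$); for every other $r$-subset $S'$ of $V$, let $sel(S')$ be the $\sigma$-rightmost vertex of $S'$, so that $S'$ is consistent w.r.t.\ $\sigma$. Since $S$ is the unique $r$-subset of $\{s_1, \ldots, s_r\}$, every $r$-subset $S' \neq S$ of $V$ contains some $s_j$ with $j \geqslant r+1$, so $S$ is indeed the only inconsistent constraint of $R_\sigma[C]$.

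To show that $C$ is not a conflict I would exhibit the ranking $\rho$ obtained from $\sigma$ by swapping $s_1$ and $s_r$. For $S$ itself the swap moves $sel(S) = s_1$ to position $r$ inside $\{s_1, \ldots, s_r\}$, so $S$ becomes consistent in $\rho$. For any other constraint $S'$ the selected vertex has index at least $r+1$ and therefore lies at a position strictly greater than $r$ in both rankings; since the swap only perturbs the vertices at positions $1$ and $r$, this selected vertex remains the $\rho$-rightmost vertex of $S'$, so $S'$ stays consistent in $\rho$.

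The argument is essentially bookkeeping, and the only delicate step I foresee is verifying that none of the consistent constraints is damaged by the swap; this boils down to the observation that the swap is confined to positions at most $r$, while every non-$S$ constraint has its selected vertex at a position strictly beyond $r$. Since $l_r \geqslant r+1$ was arbitrary, no such $l_r$ can witness a simple characterization of {\sc $r$-FAST}, which establishes the corollary.
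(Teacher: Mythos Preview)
Your proof is correct and follows essentially the same route as the paper: place the unique inconsistent constraint on the leftmost $r$ vertices and repair it by swapping $sel(S)$ with $s_r$, leaving all other constraints (whose selected vertex sits at a position $\geqslant r+1$) untouched. The paper compresses this into a one-line appeal to Lemma~\ref{lem:fashtconflict}, whereas you spell out the swap argument explicitly; if anything, your version is cleaner, since Lemma~\ref{lem:fashtconflict} is stated only for $|C| = r+1$ and the paper tacitly applies it to arbitrary $q > r$.
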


\begin{proof}
	Let $R = (V,c)$ be an instance of {\sc $r$-FAST} and $q \in \mathbb{N}$, $q > r$. 
	Let $C = \{s_1, \ldots, s_q\}$ be any set of vertices ordered 
under some ranking $\sigma$ such 
that $s_i <_\sigma s_{i+1}$ for $1 \leqslant i < q$. Assume that $S = \{s_1, \ldots, s_r\}$ is the only inconsistent constraint of $R_\sigma[C]$. By Lemma~\ref{lem:fashtconflict}, we know that $C$ is not a conflict, implying that $r$-FAST is not $q$-simply 
characterized. 
 \end{proof}

We need the following rule, 
which implies that the last vertex of any ordered instance of {\sc $r$-FAST} belongs 
to an inconsistent constraint. 

\begin{polyrule}
\label{rule:uselessvertex}
	Let $v$ be any vertex which is selected in every constraint containing it. Remove 
	$v$ and any constraint containing it from $R = (V,c)$. 
\end{polyrule}

\begin{lemma}
\label{lem:uselessvertex}
	Rule~\ref{rule:uselessvertex} is sound and can be applied in polynomial time.
\end{lemma}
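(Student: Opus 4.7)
The plan is to verify the two directions of equivalence between $(R,k)$ and the reduced instance $(R',k)$, where $R' = R[V \setminus \{v\}]$, and then to argue that the hypothesis on $v$ can be checked by brute force.

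For soundness, the key observation is that in $r$-FAST a constraint is satisfied precisely when its selected vertex is placed after all the other vertices of the constraint. So a vertex that is always selected in the constraints containing it can safely be put at the end of any ranking. First, for the forward direction, I would start from a ranking $\sigma$ of $V$ with at most $k$ inconsistent constraints and take its restriction $\sigma'$ to $V \setminus \{v\}$; the constraints of $R'$ are exactly those of $R$ that do not contain $v$, and their consistency status w.r.t. $\sigma'$ is identical to their status w.r.t. $\sigma$, so at most $k$ remain inconsistent. For the backward direction, I would take a ranking $\sigma'$ of $V \setminus \{v\}$ with at most $k$ inconsistent constraints of $R'$ and extend it to a ranking $\sigma$ of $V$ by appending $v$ at the very last position; every constraint $S$ containing $v$ then satisfies $u <_\sigma v$ for every $u \in S \setminus \{v\}$, and since $v = sel(S)$ by hypothesis, $S$ becomes consistent. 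Hence the inconsistent constraints of $R$ w.r.t.\ $\sigma$ are exactly those of $R'$ inconsistent w.r.t.\ $\sigma'$, so at most $k$ of them are inconsistent.

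For the polynomial running time, I would check, for each vertex $v \in V$, every constraint $S$ containing $v$ and verify whether $sel(S) = v$; this is $O(n^r)$ work, polynomial for fixed $r$. A vertex satisfying the hypothesis is then removed together with the constraints containing it in linear additional time.

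The only subtle point is the extension step in the backward direction: one must make sure that appending $v$ at the last position does not break any constraint, and this is precisely what the hypothesis guarantees, since every constraint containing $v$ has $v$ as its selected vertex. I do not anticipate any serious obstacle here; the rule essentially captures the fact that a vertex which is ``terminal'' in all its constraints can be removed without loss of information.
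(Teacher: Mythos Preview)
Your proof is correct and follows essentially the same approach as the paper: the forward direction is the trivial restriction of a solution, and the backward direction appends $v$ at the end of a ranking for the reduced instance, using the hypothesis $sel(S)=v$ to ensure every constraint through $v$ becomes consistent. The paper phrases the argument in terms of editions rather than rankings with at most $k$ inconsistent constraints, and omits the explicit polynomial-time discussion, but the substance is identical.
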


\begin{proof}
	First, observe that any edition of size at most $k$ for the original instance will yield an edition for the reduced one. In the other direction, assume that $R_v = (V \setminus \{v\}, c)$ admits an edition $\mathcal{F}$ of size at most $k$, and let $\sigma$ be the consistent ranking obtained after editing the constraints of $\mathcal{F}$. Since adding $v$ to the end of $\sigma$ does not introduce any inconsistent constraint, $\mathcal{F}$ is also an edition for the original instance.
 \end{proof}

Observe that a given instance can contain at most one such vertex.~We thus iteratively apply this rule until no vertex selected in every constraint containing it remains. 
Given any constraint $S$ of an ordered instance $R_\sigma = (V, c, \sigma)$, $span^-(S)$ denotes the set containing $span(S)$ and all vertices lying before $S$ in $\sigma$.~Observe that by Lemma~\ref{lem:fashtconflict}, any set $C \subseteq V$ of $r+1$ vertices such that $R_\sigma[C]$ contains exactly one inconsistent constraint is a conflict iff $C \subseteq span^-(S)$. 

\begin{polyrule}
\label{rule:correctFASHT}
	Let $R_\sigma = (V, c, \sigma)$ be an ordered instance of {\sc $r$-FAST}  and $\mathcal{S} = \{C_1, \ldots, C_m\}$, $m > k$, 
	be a simple sunflower of center $S$. Edit $S$ w.r.t. $\sigma$ and decrease $k$ by $1$.

\end{polyrule}

\begin{lemma}
\label{lem:correctFASHT}
	Rule~\ref{rule:correctFASHT} is sound. 

\end{lemma}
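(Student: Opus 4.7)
The plan is to follow the same template as in Lemma~\ref{lem:correct}, combined with a case analysis driven by Lemma~\ref{lem:fashtconflict}. Let $\mathcal{F}$ be any edition of size at most $k$ for $R_\sigma$. By Lemma~\ref{lem:flower} we know that $S \in \mathcal{F}$. Because $\mathcal{S}$ is a simple sunflower, the $R_\sigma[C_j]$'s share only the constraint $S$, so every constraint in $\mathcal{F} \setminus \{S\}$ belongs to at most one $R_\sigma[C_j]$. Since $|\mathcal{F} \setminus \{S\}| \leqslant k - 1 < m - 1$, there exists an index $i$ such that $S$ is the only constraint of $R_\sigma[C_i]$ edited by $\mathcal{F}$. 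Let $\rho$ denote a consistent ranking for the instance obtained after applying $\mathcal{F}$; in particular $\rho$ is consistent with all constraints of $R_\sigma[C_i] \setminus \{S\}$, which still carry their original selected vertices.

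The main task is then to show that on this particular $C_i$, the new selected vertex of $S$ is forced to be exactly the one making $S$ consistent with $\sigma$. Write $C_i = \{s_1, \ldots, s_{r+1}\}$ with $s_1 <_\sigma \cdots <_\sigma s_{r+1}$. Since $C_i$ is a conflict with $S$ as the only inconsistent constraint in $\sigma$, Lemma~\ref{lem:fashtconflict} gives two cases: either $S$ is unconsecutive in $C_i$ (i.e. $S = C_i \setminus \{s_j\}$ for some $2 \leqslant j \leqslant r$), or $S = \{s_2, \ldots, s_{r+1}\}$. In both cases the other $r$ constraints $C_i \setminus \{s_l\}$ ($l \neq j$ in the first case, $l \neq 1$ in the second) are consistent with $\sigma$, so their selected vertex is $s_{r+1}$ whenever $l \neq r+1$, and $s_r$ when $l = r+1$.

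Using $r \geqslant 3$, I can then show that any ranking $\rho$ consistent with these unedited constraints must satisfy $s_{r+1} >_\rho s_i$ for every $i \in \{1, \ldots, r\}$. Indeed, for each such $i$ (distinct from the excluded index), there exists an index $l \in \{2, \ldots, r\}$ with $l \neq i$ and $l$ admissible in the relevant case, and the constraint $C_i \setminus \{s_l\}$ with selected vertex $s_{r+1}$ forces $s_{r+1}$ to dominate $s_i$ in $\rho$. Since $s_{r+1} \in S$ in both cases, $s_{r+1}$ is the rightmost vertex of $S$ under $\rho$; hence the new selected vertex of $S$ in $\mathcal{F}$ must be $s_{r+1}$, which is precisely the selected vertex making $S$ consistent with $\sigma$. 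It follows that $\mathcal{F}$ edits $S$ w.r.t. $\sigma$, so $\mathcal{F} \setminus \{S\}$ is an edition of size at most $k - 1$ for the reduced instance. The converse direction is immediate: any edition of size at most $k-1$ for the reduced instance together with the single edit of $S$ yields an edition of size at most $k$ for $R_\sigma$.

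The main obstacle lies in the case analysis of the second paragraph: one must carefully verify, in both configurations of Lemma~\ref{lem:fashtconflict}, that enough unedited constraints in $R_\sigma[C_i] \setminus \{S\}$ survive (this is where the hypothesis $r \geqslant 3$ is used) to pin down the position of $s_{r+1}$ in $\rho$ and to exclude any alternative choice of selected vertex for $S$.
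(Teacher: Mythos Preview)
Your argument is correct and follows essentially the same logic as the paper's proof. Both start from Lemma~\ref{lem:flower} to get $S\in\mathcal{F}$, pick an index $i$ where only $S$ is edited, and then use the structure of $C_i$ to force the edit to be w.r.t.\ $\sigma$.

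The only difference is packaging. You unfold the two cases of Lemma~\ref{lem:fashtconflict} and explicitly show that the unedited constraints $C_i\setminus\{s_l\}$ (with $s_{r+1}$ selected) pin $s_{r+1}$ as the $\rho$-maximum of $S$, hence the new selected vertex of $S$ must be $s_{r+1}$. The paper instead observes that, since each $C_i$ is already a conflict with $S$ as sole inconsistent constraint, Lemma~\ref{lem:fashtconflict} forces $C_i\subseteq span^-(S)$; then, if $S$ were edited but not w.r.t.\ $\sigma$, $S$ would still be the unique $\sigma$-inconsistent constraint in $R_\sigma[C_i]$, and since the position of $S$'s vertices inside $C_i$ has not changed, Lemma~\ref{lem:fashtconflict} applies again and yields that $C_i$ is a conflict in the edited instance, contradicting that $\mathcal{F}$ is an edition. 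This second invocation of Lemma~\ref{lem:fashtconflict} replaces your explicit case analysis; the underlying reasoning is the same. Your inclusion of the (trivial) converse direction is a small bonus the paper omits.
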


\begin{proof}
	Let $\mathcal{F}$ be any edition of size at most $k$: by Lemma~\ref{lem:flower}, $\mathcal{F}$ must contain $S$. Since $|\mathcal{F}| \leqslant k$, there exists $1 \leqslant i \leqslant m$ such that $S$ is the only constraint edited by $\mathcal{F}$ in $R[C_i]$. Assume that $S$ was not edited w.r.t. $\sigma$: since no other constraint has been edited in $R[C_{i}]$, $R_\sigma[C_i]$ still contains exactly one inconsistent constraint (namely $S$). Observe now that, by definition of a simple conflict, $\cup_{i=1}^m C_i \subseteq span^-(S)$ holds for any simple sunflower $\mathcal{S}$. Hence % of center $S$. 
	Lemma~\ref{lem:fashtconflict} implies that $C_i$ is 
	a conflict, contradicting the fact that $\mathcal{F}$ is an edition. 
 \end{proof}

Here again, we can prove the existence of a simple sunflower under a certain cardinality assumption. 

\begin{lemma}
	\label{lem:correctrFAST}
	Let $R_\sigma = (V, c, \sigma)$ be an ordered instance of {\sc $r$-FAST}  with at most $p \geqslant 1$ inconsistent constraints, and $S$ be an inconsistent constraint with $|span^-(S)| > p + k + r$. Then $S$ is the center of a simple sunflower $\{C_1, \ldots, C_m\}$, $m > k$.
\end{lemma}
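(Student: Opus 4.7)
The plan is to use the structural characterization from Lemma~\ref{lem:fashtconflict} (together with the remark preceding Rule~\ref{rule:correctFASHT}) to identify the candidate petals, and then count how many of them can fail to be ``simple'' because of the other inconsistent constraints in $R_\sigma$.

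First I would fix notation: set $W = span^-(S) \setminus S$, which has at least $|span^-(S)| - r$ elements since $S \subseteq span^-(S)$. Each candidate petal will be of the form $C_v = S \cup \{v\}$ for some $v \in W$, which has exactly $r+1$ vertices because $v \notin S$. By the remark following Lemma~\ref{lem:fashtconflict}, whenever $R_\sigma[C_v]$ contains $S$ as its only inconsistent constraint, the inclusion $C_v \subseteq span^-(S)$ ensures that $C_v$ is a conflict. So the task reduces to exhibiting more than $k$ distinct vertices $v \in W$ for which $R_\sigma[C_v]$ has only $S$ as inconsistent constraint.

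Next I would bound the number of ``bad'' candidates. The induced instance $R_\sigma[C_v]$ consists of $S$ together with the $r$ constraints of the form $S_j(v) = (S \setminus \{s_j\}) \cup \{v\}$ for $j \in \{1, \ldots, r\}$; a candidate $v$ is bad exactly when some $S_j(v)$ is inconsistent in $R_\sigma$. The key observation is that for any inconsistent constraint $S' \neq S$ of $R_\sigma$, there is at most one vertex $v$ and one index $j$ with $S' = S_j(v)$: such an equality forces $|S' \cap S| = r-1$ and then $v$ is the unique element of $S' \setminus S$. Since $R_\sigma$ has at most $p-1$ inconsistent constraints other than $S$, at most $p-1$ candidates in $W$ are bad.

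Therefore the number of good candidates is at least $|span^-(S)| - r - (p-1)$, which by the hypothesis $|span^-(S)| > p + k + r$ is at least $k+2 > k$. Picking $m > k$ such vertices $v_1, \ldots, v_m$ yields petals $C_i = S \cup \{v_i\}$ that are all conflicts; since the $v_i$ are pairwise distinct and disjoint from $S$, we have $C_i \cap C_j = S$ for $i \neq j$, so $\{C_1, \ldots, C_m\}$ is a simple sunflower of center $S$. The whole construction is polynomial: enumerate the vertices of $span^-(S) \setminus S$, discard those blocked by one of the other inconsistent constraints of $R_\sigma$, and return any $k+1$ of the remaining vertices. The only mildly delicate point is the bookkeeping of the blocking map $S' \mapsto v$, but as argued above this map is injective on the relevant constraints, which is what keeps the count tight.
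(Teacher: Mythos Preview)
Your argument is correct and follows the same approach as the paper: build petals $C_v = S \cup \{v\}$ for $v \in span^-(S) \setminus S$, discard those $v$ for which some other inconsistent constraint appears in $R_\sigma[C_v]$, and invoke Lemma~\ref{lem:fashtconflict} to conclude the remaining $C_v$ are conflicts. Your presentation is in fact more careful than the paper's --- you make the injectivity of the blocking map $S' \mapsto v$ explicit and thereby obtain the slightly sharper bound of $p-1$ bad candidates rather than $p$ --- but this is a refinement of the same idea rather than a different route.
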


\begin{proof}
Since there at most $p$ vertices $v$ such that $R_\sigma[S \cup \{v\}]$ contains more than one inconsistent constraint, there exist $k + 1$ vertices $\{s_1, \ldots, s_{k+1}\} \subseteq span^-(S)$ such that $C_i = S \cup \{s_i\}$ contains exactly one inconsistent constraint, $1 \leqslant i \leqslant k+1$. Since $s_i \in span^-(S)$ for $1 \leqslant i \leqslant k +1$, Lemma~\ref{lem:fashtconflict} implies that $C_i$ is a conflict for $1 \leqslant i \leqslant k+1$. It follows that $S$ is the center of the simple sunflower $\{C_1, \ldots, C_{k+1}\}$.
 \end{proof}

\begin{theorem}
\label{thm:fasht}
	{\sc $r$-FAST} admits a kernel with at most $6k + r$ vertices.
\end{theorem}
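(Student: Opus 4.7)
The plan is to mimic the proof of Theorem~\ref{thm:kernelSC}, using Theorem~\ref{thm:approx} for the constant-factor approximation, Lemma~\ref{lem:correctrFAST} for the existence of a simple sunflower and Rule~\ref{rule:correctFASHT} for the reduction, together with the auxiliary Rule~\ref{rule:uselessvertex}. First, I would run \textsc{Inc-Degree} on the input $R = (V, c)$ to obtain an ordered instance $R_\sigma = (V, c, \sigma)$ with $p$ inconsistent constraints. If $p \leqslant k$ we return a small trivial \textsc{Yes}-instance, and if $p > 5k$ then the 5-approximation guarantee of Theorem~\ref{thm:approx} implies that the optimum is strictly larger than $k$, so we return a small trivial \textsc{No}-instance; hence we may assume $k < p \leqslant 5k$.

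Next, I would apply Rule~\ref{rule:uselessvertex} exhaustively. Once it no longer applies, the last vertex $v_n$ of $V$ w.r.t. $\sigma$ is non-selected in at least one constraint containing it, and since $v_n$ is rightmost that constraint must be inconsistent. I would then iterate Rule~\ref{rule:correctFASHT}: whenever an inconsistent constraint $S$ satisfies $|span^-(S)| > p + k + r$, Lemma~\ref{lem:correctrFAST} exhibits in polynomial time a simple sunflower of center $S$ and size greater than $k$, so the rule applies, edits $S$ w.r.t. $\sigma$, and decreases both $p$ and $k$ by one. The process can thus be iterated exactly as in the proof of Theorem~\ref{thm:kernelSC} until no inconsistent constraint violates the cardinality bound (otherwise, if $k$ reaches zero before $p$ does, we return a trivial \textsc{No}-instance).

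When no further reduction applies, the size bound follows by instantiating the failure of Rule~\ref{rule:correctFASHT} at an inconsistent constraint $S^*$ containing $v_n$: since $v_n$ is last in $\sigma$, every vertex of $V$ lies either before $S^*$ or inside $span(S^*)$, so $V = span^-(S^*)$, and therefore
\[
|V| \;=\; |span^-(S^*)| \;\leqslant\; p + k + r \;\leqslant\; 5k + k + r \;=\; 6k + r,
\]
which yields the claimed bound.

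The main subtlety, and the reason Theorem~\ref{thm:kernelSC} cannot be invoked verbatim, is that $r$-FAST is not $l_r$-simply characterized (Corollary~\ref{coro:fashtlr}): adding a vertex strictly \emph{after} an inconsistent constraint need not create a conflict, which is why Lemma~\ref{lem:correctrFAST} only produces a sunflower from vertices inside $span^-(S)$. Rule~\ref{rule:uselessvertex} is the essential device to circumvent this asymmetry: it guarantees that the rightmost vertex of $\sigma$ participates in an inconsistent constraint $S^*$ whose $span^-$ exhausts the whole vertex set, turning the per-constraint bound of Lemma~\ref{lem:correctrFAST} into a global bound on $|V|$.
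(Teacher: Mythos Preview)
Your approach matches the paper's and you correctly identify all the key ingredients: Theorem~\ref{thm:approx} for the approximation, Lemma~\ref{lem:correctrFAST} for producing the sunflower, Rule~\ref{rule:correctFASHT} for the edit step, and the role of Rule~\ref{rule:uselessvertex} in anchoring the bound at the rightmost vertex. There is, however, one small but genuine gap. You apply Rule~\ref{rule:uselessvertex} only \emph{once}, before entering the loop over Rule~\ref{rule:correctFASHT}, and then at termination you invoke an inconsistent constraint $S^*$ containing the last vertex $v_n$. But the iterations of Rule~\ref{rule:correctFASHT} may well have edited every inconsistent constraint through $v_n$ (indeed, any such constraint has $span^- = V$ and is therefore among the first to trigger the rule whenever $|V| > p+k+r$). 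Once that happens $v_n$ is selected in every remaining constraint containing it, so no $S^*$ exists and your final inequality chain has nothing to anchor it.

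The paper closes this by interleaving the two rules: after each application of Rule~\ref{rule:correctFASHT} it re-applies Rule~\ref{rule:uselessvertex}, so the invariant ``the current last vertex lies in some inconsistent constraint'' is maintained throughout. Equivalently, you could re-apply Rule~\ref{rule:uselessvertex} exhaustively once your loop halts and only then read off $S^*$. Either way the arithmetic is unchanged: each edit lowers both $p$ and $k$ by one, and each removal via Rule~\ref{rule:uselessvertex} discards only consistent constraints (the removed vertex being currently last and selected everywhere), so $|V| \leqslant p' + k' + r \leqslant p + k + r \leqslant 6k + r$ still holds.
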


\begin{proof}
	Let $R = (V, c)$ be an instance of {\sc $r$-FAST} . We start by running the constant-factor approximation (Theorem~\ref{thm:approx}) on $R$, obtaining a ranking $\sigma$ of $V$ with at most $p$ inconsistent constraints. Notice that we may assume $p > k$ and $p \leqslant 5k$, since otherwise we return a small trivial \textsc{Yes}- (resp. \textsc{No}-)instance. Assume that $|V| > p + k + r$ and let $S$ be any inconsistent constraint containing $v_n$ (thus $span^-(S) = V$). Observe that $S$ is well-defined since $R$ is reduced by Rule~\ref{rule:uselessvertex} (and hence $v_n$ must belong to an inconsistent constraint). 
	By Lemma~\ref{lem:correctrFAST}, it follows that we can find a simple sunflower $\{C_1, \ldots, C_m\}$, $m > k$ in polynomial time. We thus apply Rule~\ref{rule:correctFASHT} and edit $S$ w.r.t. $\sigma$. We now apply Rule~\ref{rule:uselessvertex} and repeat this process until we either do not find a large enough simple sunflower or $k < 0$. In the former case, Lemma~\ref{lem:correctrFAST} implies that $|V| \leqslant p + k + r \leqslant 6k + r$, while in the latter case we return a small trivial \textsc{No}-instance.  
 \end{proof}

\section{Conclusion}

In this paper, we considered the kernelization of several dense ranking $r$-CSPs. 
In particular, we proved that any ranking $r$-CSP that can be \emph{simply characterized} and that 
admits a constant factor-approximation algorithm also admits a linear vertex-kernel. Interestingly, our kernelization algorithm makes use of a modification of the classical 
sunflower rule, which usually provides \emph{polynomial} kernels.  
As a main consequence of this result, we improved the size of the kernel for \BIT{}, and described the first polynomial kernels for {\sc $r$-BIT} and {\sc $r$-TFAST}. 
A natural question is thus whether these techniques be applied 
to dense ranking $r$-CSP for (weakly-)fragile constraints~\cite{KS11}? Such problems are known to be fixed-
parameter tractable~\cite{KS10}, but the existence of a polynomial kernel remains an 
open problem. We also considered another generalization of \FAST{}, namely \FASHT{}. We obtained a $5$-approximation algorithm as well as a linear vertex-kernel for this problem. Investigating analogy with \FAST{}, it would be interesting to determine whether this problem admits a PTAS. \\

{\small

\noindent \textbf{Acknowledgments.} Research supported by the AGAPE project  (ANR-09-BLAN-0159). The author would like to thank Christophe Paul, St\'ephan Thomass\'e,  Mathieu Liedloff and Mathieu Chapelle for helpful discussions and comments. 

\bibliographystyle{plain}
\bibliography{ranking}
}

\end{document}